\newcommand{\bul}{\mathfrak{s}}
\newcommand{\squ}{\mathfrak{d}}
\newcommand{\tri}{\mathfrak{t}}
\newcommand{\mc}{\mathcal}
\newcommand{\wh}{\widehat}
\newcommand{\Rd}{\ensuremath{\mathcal{R}_{\sigma}}}
\newcommand{\LT}{\ensuremath{\mathcal{L}}}
\newcommand{\lca}{\ensuremath{\operatorname{lca}}}
\newcommand{\Gen}{\ensuremath{\mathbb{G}}}
\newcommand{\Spe}{\ensuremath{\mathbb{S}}}
\newcommand{\rt}[1]{\ensuremath{\mathsf{(#1)}}}
\newcommand{\EHGT}{\ensuremath{\mathcal{E}}}
\newcommand{\Th}{\ensuremath{T_{\mathcal{\overline{E}}}}}
\newcommand{\sT}{\sigma_{\Th}}
\newcommand{\tT}{\ensuremath{\tau_T}}
\providecommand{\keywords}[1]{\textbf{\textit{Keywords: }} #1}
\newtheorem{theorem}{Theorem}[section]
\newtheorem{lemma}{Lemma}[section]
\newtheorem{remark}{Remark}
\newtheorem{definition}{Definition}
\begin{document}

\title{Biologically Feasible Gene Trees, Reconciliation Maps and Informative Triples}

\author[]{Marc Hellmuth}

\affil[]{\footnotesize Dpt.\ of Mathematics and Computer Science, University of Greifswald, Walther-
  Rathenau-Strasse 47, D-17487 Greifswald, Germany\\ \smallskip

	Saarland University, Center for Bioinformatics, Building E 2.1, P.O.\ Box 151150, D-66041 Saarbr{\"u}cken, Germany\\ \smallskip
	Email: \texttt{mhellmuth@mailbox.org}}

\date{\ }

\maketitle

\abstract{ 
The history of gene families - which are equivalent to \emph{event-labeled} gene trees - can be reconstructed from empirically  estimated evolutionary event-relations  containing pairs of orthologous, paralogous or xenologous genes.  The question then arises as whether inferred event-labeled gene trees are \emph{biologically feasible}, that is, if there is a possible true history that would explain a given gene tree.  In practice, this problem is boiled down to finding a reconciliation map - also known as DTL-scenario - between the event-labeled gene trees and a (possibly unknown) species tree.

In this contribution, we first characterize whether there is a valid
reconciliation map for binary event-labeled gene trees $T$ that contain
speciation, duplication and horizontal gene transfer events and some unknown
species tree $S$ in terms of ``informative'' triples that are displayed in $T$
and provide information of the topology of $S$. These informative triples are used to 
infer the unknown species tree $S$ for $T$.
We obtain a similar result for non-binary gene trees. To this end, however, 
the reconciliation map needs to be further restricted. 
We provide a
polynomial-time algorithm to decide whether there is a species tree for a given
event-labeled gene tree, and in the positive case, to construct the
species tree and the respective (restricted) reconciliation map.

However, informative triples as well as DTL-scenarios have its limitations
when they are used to explain the biological feasibility of gene trees. 
While reconciliation maps imply biological feasibility, 
we show that the converse is not true in general.  Moreover, we show that 
informative triples do neither provide
enough information to characterize ``relaxed'' DTL-scenarios nor 
non-restricted reconciliation maps for 
non-binary biologically feasible gene trees.

}

\smallskip
\noindent
\keywords{DTL-scenario, Reconciliation, Horizontal gene transfer, 
				Phylogenetic tree, Triples, Event-label}

\sloppy

\section{Background}
\label{sec:intro}

The evolutionary history of genes is intimately linked with the history of the
species in which they reside. Genes are passed from generation to generation to
the offspring. Some of those genes are frequently duplicated, mutate, or get
lost - a mechanism that also ensures that new species can evolve. In particular,
genes that share a common origin (\emph{homologs}) can be classified into the
type of their ``evolutionary event relationship'', namely \emph{orthologs}, \emph{paralogs}
and \emph{xenologs} \cite{Gray:83,Fitch2000}. Two homologous genes are
\emph{orthologous} if at their most recent point of origin the ancestral gene is
transmitted to two daughter lineages; a \emph{speciation} event happened. They
are \emph{paralogous} if the ancestor gene at their most recent point of origin
was duplicated within a single ancestral genome; a \emph{duplication} event
happened. Horizontal gene transfer (HGT) refers to the transfer of genes between
organisms in a manner other than traditional reproduction and across different
species and yield so-called \emph{xenologs}.
In contrast to orthology and paralogy, the definition of xenology is less
well established and by no means consistent in the biological
literature. One definition stipulates that two genes are
\emph{xenologs} if their history since their common ancestor involves
horizontal transfer of at least one of them \cite{Fitch2000,Jensen:01}. 
The mathematical framework for evolutionary event-relations 
relations in terms of symbolic
ultrametrics, cographs and 2-structures \cite{Boeckner:98,HHH+13,HSW:16,HW:16b}, on the other hand, naturally
accommodates more than two types of events associated with the internal nodes of
the gene tree. We follow the notion in 
\cite{Gray:83, HSW:16} and call two genes xenologous,  whenever 
their least common ancestor was a HGT event.  

The knowledge of evolutionary event relations such as orthology, paralogy or
xenology is of fundamental importance in many fields of mathematical and
computational biology, including the reconstruction of evolutionary
relationships across species \cite{DBH-survey05,GCMRM:79,HHH+12,Hellmuth:15a,Lafond2014},
as well as functional genomics and gene organization in species
\cite{GK13,TG+00,TKL:97}. Intriguingly, there are methods to infer orthologs
\cite{Altenhoff:09,AGGD:13,ASGD:11,CMSR:06,Lechner:11a,Lechner:14,inparanoid:10,TG+00,T+11,Wheeler:08}
or to detect HGT \cite{CBRC:02,Dessimoz2008,LH:92,PMT+99,RSLD15} \emph{without}
the need to construct gene or species trees. Given  empirical estimated
event-relations one can infer the history of gene families which are
equivalent to event-labeled gene trees \cite{HHH+13,HSW:16,Hellmuth:15a, 
lafond2015orthology,DEML:16,LDEM:16}. For an
event-labeled gene tree to be biologically feasible there must be a putative
``true'' history that can explain the observed gene tree. However, in practice
it is not possible to observe the entire evolutionary history as e.g.\ gene
losses eradicate the entire information on parts of the history. Therefore, in
practice the problem of determining whether an event-labeled gene tree is
biologically feasible is reduced to the problem of finding a valid
reconciliation map, also known as DTL-scenario, between the event-labeled gene
trees and an arbitrary (possibly unknown) species tree. Tree-reconciliation
methods have been extensively studied over the last years
\cite{GJ:06,RLG+14,DCH:09,DES:14,VSGD:08,lafond2012optimal,SLX+12,huson2011survey,page1998genetree,
STDB:15,SD:12,MMZ:00,DRDB11,Doyon2010,EHL10,GCMRM:79,THL:11}
and are often employed to identify inner vertices of the gene tree as a
duplication, speciation or HGT, given that both, the gene and the species tree
are available.

In this contribution, we assume that only the event-labeled gene tree $T$ is available and wish
to answer the question: How much information about the species tree $S$ and the 
reconciliation  between $T$ and $S$ is already contained in the gene tree $T$?
As we shall see, this question can easily be answered for \emph{binary} gene trees in terms
of ``informative'' triples that are displayed in $T$ and provide
information on the topology of $S$. The latter generalizes results established
by Hernandez et al.\ \cite{HHH+12} for the HGT-free case. 
To obtain a similar result for \emph{non-binary} gene trees, we show
that the reconciliation map needs to be restricted. 
Nevertheless, informative triples can then  be used to characterize whether
there is a valid restricted reconciliation map for a given non-binary
gene tree and some unknown species tree $S$, as well as to construct $S$,
provided the informative triples are consistent. 
However, this approach
has also some limitations. We prove that 
``informative'' triples are not sufficient to characterize the existence
of  a possibly ``relaxed'' reconciliation map.
Moreover, while reconciliation maps give clear evidence of gene trees to be biologically feasible, 
the converse is in general not true. We provide a simple example that shows that not all
biologically feasible gene trees can be explained by DTL-scenarios. 

\section{Preliminaries}
\label{sec:prelim}


A \emph{rooted tree $T=(V,E)$ (on $L$)} is an acyclic connected simple graph
with leaf set $L\subseteq V$, set of edges $E$, and set of interior vertices
$V^0=V\setminus L$ such that there is one distinguished vertex $\rho_T \in V$,
called the \emph{root of $T$}. 

A vertex $v\in V$ is called a \emph{descendant} of $u\in V$, $v \preceq_T u$, and $u$ is an
\emph{ancestor} of $v$, $u \succeq_T v$, if $u$ lies on the path from
$\rho_T$ to $v$. As usual, we write $v \prec_T u$ and $u \succ_T v$ to
mean $v \preceq_T u$ and $u\ne v$.  
If $u \preceq_T v$ or $v \preceq_T u$ then $u$ and $v$
are \emph{comparable} and otherwise, \emph{incomparable}.  
For $x\in V$, we write $L_T(x):=\{ y\in L| y\preceq x\}$ for the
set of leaves in the subtree $T(x)$ of $T$ rooted in $x$.

\begin{remark}
It will be convenient to use a notation for edges $e$ that implies which of the vertex in
$e$ is closer to the root. Thus, the notation for edges $(u,v)$ of a tree
is always chosen such that $u\succ_T v$. 
\end{remark}

For our discussion below we need to extend the ancestor relation $\preceq_T$ on
$V$ to the union of the edge and vertex sets of $T$. More precisely, for the
edge $e=(u,v)\in E$ we put $x \prec_T e$ if and only if $x\preceq_T v$ and $e
\prec_T x$ if and only if $u\preceq_T x$. For edges $e=(u,v)$ and $f=(a,b)$ in
$T$ we put $e\preceq_T f$ if and only if $v \preceq_T b$. In the latter case, 
the edges $e$ and $f$ are called comparable.

For a non-empty subset of leaves $A\subseteq L$, we define $\lca_T(A)$, or the
\emph{least common ancestor of $A$}, to be the unique $\preceq_T$-minimal vertex
of $T$ that is an ancestor of every vertex in $A$. In case $A=\{x,y \}$, we put
$\lca_T(x,y):=\lca_T(\{x,y\})$ and if $A=\{x,y,z \}$, we put
$\lca_T(x,y,z):=\lca_T(\{x,y,z\})$. For later reference, note that for all $x\in
V$ it hold that $x=\lca_T(L_T(x))$. We will also make frequent use that for two
non-empty vertex sets $A,B$ of a tree, it always holds that $\lca(A\cup B) =
\lca(\lca(A),\lca(B))$. 

A \emph{phylogenetic tree $T$ (on $L$)} is a rooted tree $T=(V,E)$ (on $L$) such
that no interior vertex  $v\in V^0$ has degree two, except possibly the root 
$\rho_T$. If $L$ corresponds to a \emph{set of genes} $\Gen$
or \emph{species} $\Spe$, we call a phylogenetic tree on $L$ \emph{gene tree}
and \emph{species tree}, respectively. The \emph{restriction} $T|_{L'}$ of of a
phylogenetic tree $T$ to $L'\subseteq L$ is the rooted tree with leaf set $L'$
obtained from $T$ by first forming the minimal spanning tree in $T$ with leaf
set $L'$ and then by suppressing all vertices of degree two with the exception
of $\rho_T$ if $\rho_T$ is a vertex of that tree.


Rooted triples are phylogenetic trees on three leaves with precisely two
interior vertices. They constitute an important concept in the context of
supertree reconstruction \cite{Semple:book,Bininda:book,Dress:book} and will
also play a major role here. A rooted tree $T$ on $L$ \emph{displays} a triple
$\rt{xy|z}$ if, $x,y,z\in L$ and the path from $x$ to $y$ does not intersect the
path from $z$ to the root $\rho_T$ and thus, having $\lca_T(x,y)\prec_T
\lca_T(x,y,z)$. We denote by $\mc{R}(T)$ the set of all triples that are
displayed by the rooted tree $T$.  

A set $R$ of triples is \emph{consistent} if there is a rooted tree $T$ on $L_R=
\cup_{r\in R} L_r(\rho_r)$ such that $R\subseteq \mc{R}(T)$ and thus, $T$
\emph{displays} each triple in $R$. Not all sets of triples are consistent of
course. Nevertheless, given a triple set $R$ there is a polynomial-time
algorithm, referred to in \cite{Semple:book,Steel:book} as \texttt{BUILD}, that
either constructs a phylogenetic tree $T$ that displays $R$ or that recognizes
that $R$ is not consistent \cite{Aho:81}. 
The runtime of BUILD is $O(|L_R||R|)$ \cite{Semple:book}. 
Further practical implementations and improvements have been
 discussed in  \cite{Jansson:05,DF:16,Henzinger:99,Holm:01}.

We will consider rooted trees
$T=(V,E)$ from which particular edges are removed. Let $\EHGT\subseteq E$ and
consider the forest $\Th\coloneqq (V,E\setminus \EHGT)$. We can preserve the
order $\preceq_T$ for all vertices within one connected component of $\Th$ and
define $\preceq_{\Th}$ as follows: $x\preceq_{\Th}y$ iff $x\preceq_{T}y$ and
$x,y$ are in same connected component of $\Th$. Since each connected component
$T'$ of $\Th$ is a tree, the ordering $\preceq_{\Th}$ also implies a root
$\rho_{T'}$ for each $T'$, that is, $x\preceq_{\Th} \rho_{T'}$ for all $x\in
V(T')$. If $L(\Th)$ is the leaf set of $\Th$, we define $L_{\Th}(x) = \{y\in
L(\Th) \mid y\prec_{\Th} x\}$ as the set of leaves in $\Th$ that are reachable
from $x$. Hence, all $y\in L_{\Th}(x)$ must be contained in the same connected
component of $\Th$. We say that the forest $\Th$ displays a triple $r$, if $r$
is displayed by one of its connected components. Moreover, $\mc{R}(\Th)$ denotes
the set of all triples that are displayed by the forest $\Th$.

\section{Biologically Feasible and Observable Gene Trees}

A gene tree  arises through a series of events (speciation, duplication, HGT,
and gene loss) along a species tree. In a ``true history'' the gene tree
$\wh{T} = (V,E)$ on a set of genes $\wh\Gen$ is equipped with an 
\emph{event-labeling} map $t:V\cup E\to \wh I\cup \{0,1\}$ with
$\wh I=\{\bul,\squ,\tri,\odot,\textrm{x}\}$ that assigns to each
vertex $v$ of $\wh T$ a value $t(v)\in \wh I$ indicating whether $v$ is 
a speciation event ($\bul$), duplication event ($\squ$), HGT event
($\tri$), extant leaf ($\odot$) or a loss event ($\textrm{x}$).
In addition, to each edge $e$ a value $t(e)\in \{0,1\}$ is added that
indicates whether $e$ is a \emph{transfer edge} ($1$) or not ($0$).
Note, in the figures we used the symbols $\bullet, \square$ and $\triangle$
for $\bul, \squ$ and $\tri$ respectively.
Hence, $e=(x,y)$ and $t(e) =1$ iff $t(x)=\tri$ and the genetic material is
transferred from the species containing $x$ to the species containing $y$. 
We remark that the restriction $t_{|V}$ of $t$ to the vertex set $V$ was introduced
as ``symbolic dating map'' in \cite{Boeckner:98} and that there is a close
relationship to so-called cographs \cite{HHH+13,HW:15,HW:16a}.
Let $\Gen \subseteq \wh{\Gen}$ be the set of all extant genes in $\wh T$. 
Hence,  there is a map $\sigma:\Gen\to
\Spe$ that assigns to each extant gene the extant species in which it resides.

We assume that the gene tree and its event labels are inferred from
(sequence) data, i.e., $T$ is restricted to those labeled trees that can be
constructed at least in principle from observable data. 
Gene losses eradicate the entire information on parts of the history and thus, 
cannot directly be observed from extant sequences.  
Hence, in our setting the (observable) gene tree $T$ is the restriction  $\wh{T}_{|\Gen}$ to
the set of extant genes, see Figure \ref{fig:true}.
Since all leaves of $T$ are extant genes in $\Gen$ we don't need to specially
label the leaves in \Gen, and thus simplify the 
event-labeling map $t:V^0\cup E\to I\cup \{0,1\}$ by assigning only to the 
interior vertex an event in $I=\{\bul,\squ,\tri\}$. 
We assume here that all non-transfer edges transmit the genetic material vertically, 
that is, from an ancestral species to its descendants.

\begin{definition}
We write $(T;t,\sigma)$ for the  tree $T=(V,E)$ with event-labeling $t$
and corresponding map $\sigma$. The set $\EHGT = \{e\in E\mid t(e)=1\}$ 
will always denote  the set of transfer edges in  $(T;t,\sigma)$. 

Additionally, we consider gene tree $(T=(V,E);t,\sigma)$ from which the transfer edges have been removed, resulting
in the forest $\Th = (V, E\setminus \EHGT)$ in which we preserve the event-labeling $t$,
that is, we use the restriction $t_{|V}$ on $\Th$. 
\end{definition}

We call a gene tree $(T;t,\sigma)$ on $\Gen$ \emph{biologically feasible}, 
if there is a true scenario such that $T = \wh{T}_{|\Gen}$, that is, 
there is a true history that can explain  $(T;t,\sigma)$. 
By way of example, the gene tree in Figure \ref{fig:true}(right) 
is biologically feasibly. 
However, so-far it is unknown whether there are gene trees $(T;t,\sigma)$
that are not biologically feasible. 
Answering the latter might be a hard task, as many HGT or duplication vertices 
followed by losses can be inserted into $T$ that may result in a putative
true history that explains the event-labeled gene tree.

\begin{figure*}[tbp]
  \begin{center}
    \includegraphics[width=.7\textwidth]{./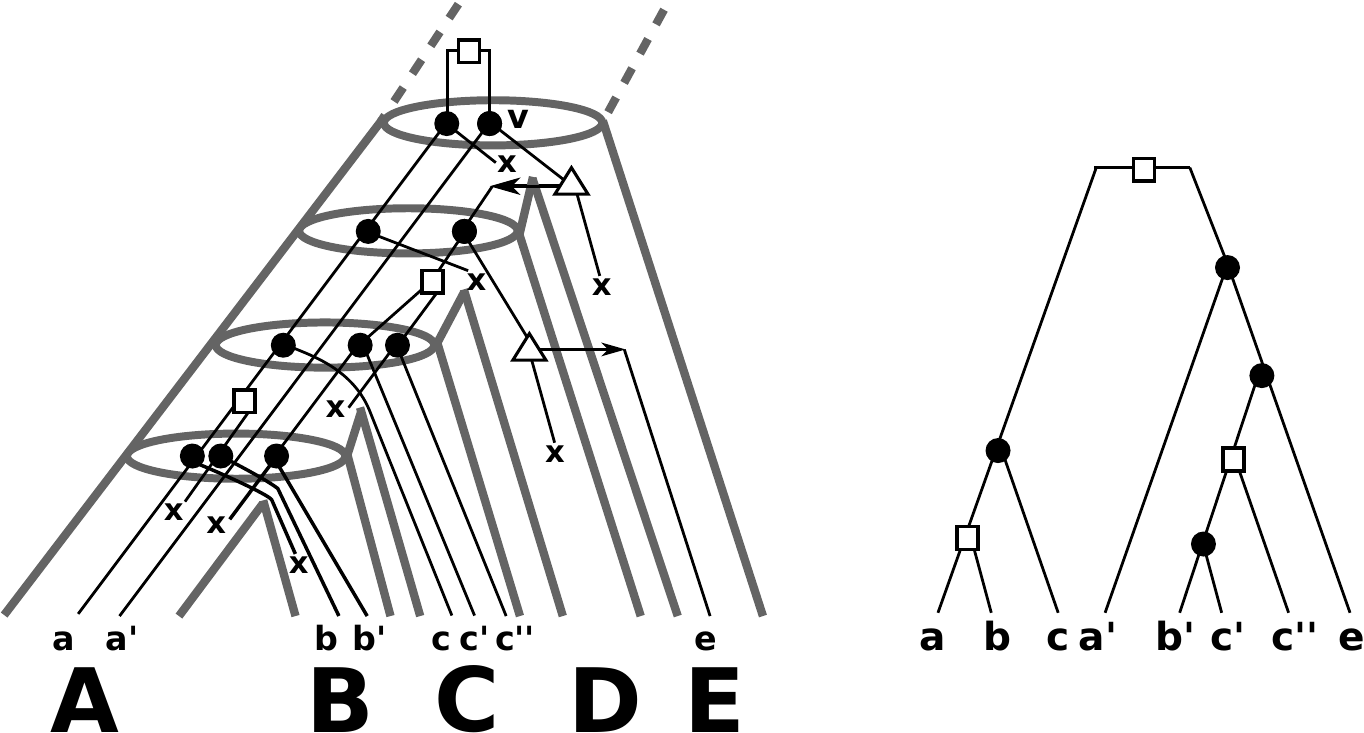}
  \end{center}
	\caption{\emph{Left:} 
		An example of a ``true'' history of a gene tree that evolves along the
		(tube-like) species tree. The set of
		extant genes $\Gen$ comprises $a,a',b,b',c,c',c''$ and $e$ and $\sigma$ maps each
		gene in $\Gen$ to the species (capitals below the genes) $A,B,C,E\in
		\sigma(\Gen)$. For simplicity all speciation events followed by a loss along
		the path from $v$ to $a'$ in $T$ are omitted.
	   \emph{Left:} The observable gene tree $(T;t,\sigma)$ is shown. Since there 
		is a true scenario which explains $(T;t,\sigma)$, the gene tree is biologically feasible.
		In particular, $(T;t,\sigma)$ satisfies (O1), (O2) and (O3).
    }
	\label{fig:true}
\end{figure*}

Following N{\o}jgaard et al.\ \cite{N+17}, we additionally restrict the set of observable gene trees $(T;t,\sigma)$ 
to those gene trees that satisfy the following observability axioms:\smallskip

\begin{description}
\item[(O1)] Every internal vertex $v$ has degree at least $3$, except
  possibly the root which has degree at least $2$. 
\item[(O2)] Every HGT node has at least one transfer edge, $t(e)=1$, and at
  least one non-transfer edge, $t(e)=0$; 
\item[(O3)] \emph{\textbf{(a)}} 
If $x\in V$ is a speciation vertex, 
then there are distinct children $v$,
$w$ of $x$ in $T$ with 
	$\sT(v)\cap \sT(w) = \emptyset$. \\
\emph{\textbf{(b)}} 
If $(x,y) \in \EHGT$, then 
	$\sT(x)\cap \sT(y) = \emptyset$.
\end{description} \smallskip
Condition (O1) is justified by the restriction $T=\wh{T}_{|\Gen}$ 
of the true binary gene tree $\wh{T}$ to the set of extant genes $\Gen$, 
since  $T=\wh{T}_{|\Gen}$ is always a \emph{phylogenetic} tree. 
In particular, (O1) ensures that every event leaves a historical trace in the
sense that there are at least two children that have survived in at least
two of its subtrees. 
Condition (O2) ensures that for
an HGT event a historical trace remains of both the transferred and the
non-transferred copy.  

Condition (O3.a) is a consequence of (O1), (O2) and a stronger condition (O3.a') claimed in \cite{N+17}:
\emph{If $x$ is a speciation vertex, then there
are at least two distinct children $v,w$ of $x$ such that the species $V$
and $W$ that contain $v$ and $w$, resp., are incomparable in $S$.}
Note, a speciation vertex $x$ cannot be observed from data
if it does not ``separate'' lineages, that is, there are two leaf
descendants of distinct children of $x$ that are in distinct
species. Condition (O3.a') is even weaker and 
ensures that any ``observable'' speciation vertex $x$ separates at
least locally two lineages. As a result of  (O3.a') one can obtain
(O3.a) \cite{N+17}.
Intuitively, (O3.a) is satisfied since within a connected component of $\Th$ no
genetic material is exchanged between non-comparable nodes. Thus, a gene
separated in a speciation event necessarily ends up in distinct species in
the absence of the transfer edges.

Condition (O3.b) is a consequence of (O1), (O2) and a stronger condition (O3.b') claimed in \cite{N+17}:
\emph{If $(v,w)$ is a transfer edge in $T$, then   $t(v)=\tri$ and the
species $V$ and $W$ that contain $v$ and $w$, resp., are
incomparable in $S$.}
Note, if $(v,w)\in \EHGT$ then $v$ signifies the transfer event
itself but $w$ refers to the next (visible) event in the gene tree
$T$.  In a ``true
history'' $v$ is contained in a species $V$ that transmits its genetic
material (maybe along a path of transfers) to a contemporary species $Z$
that is an ancestor of the species $W$ containing $w$. In order to have
evidence that this transfer happened, Condition (O3.b') is used and
as a result one obtains (O3.b). 
The intuition behind  (O3.b) is as follows:
Observe that  ${\Th}(x)$ and ${\Th}(y)$ are subtrees of distinct connected
components of $\Th$ whenever $(x,y) \in \EHGT$. 
Since HGT amounts to the transfer of genetic material
\emph{across} distinct species, the genes $x$ and $y$ are in distinct
species, cf.\ (O3.b). However, since $\Th$ does not contain transfer edges  and thus, there is no
genetic material transferred across distinct species \emph{between} distinct
connected components in $\Th$. We refer to \cite{N+17} for further details

\begin{remark}
In what follows, we only consider  gene trees $(T;t,\sigma)$
that satisfy (O1), (O2) and (O3).

We simplify the notation a bit and write $\sT(u):=\sigma(L_{\Th}(u))$.
\end{remark}

Based on Axiom (O2) the following results was established in \cite{N+17}.
\begin{lemma}
Let $(T;t,\sigma)$ be an event-labeled gene tree.
Let $\mathcal{T}_1, \dots, \mathcal{T}_k$ be
the connected components of $\Th$ with roots $\rho_1, \dots, \rho_k$,
respectively. 
Then, $\{L_{\Th}(\rho_1), \dots, L_{\Th}(\rho_k)\}$ forms a partition of $\Gen$. 
\label{lem:part_gen}
\end{lemma}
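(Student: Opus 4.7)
The plan is to show two things separately: (i) the leaf set $L(\Th)$ of the forest coincides with $\Gen$, and (ii) every leaf $\ell \in L(\Th)$ lies in exactly one of the sets $L_{\Th}(\rho_i)$. Since the sets $V(\mathcal{T}_1),\dots,V(\mathcal{T}_k)$ form the vertex partition of $\Th$ into connected components, part (ii) is essentially immediate once part (i) is in hand, so the work concentrates on (i).

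For (i), I would argue both inclusions. The inclusion $\Gen \subseteq L(\Th)$ is straightforward: deleting edges cannot create new descendants, so any leaf of $T$ still has no children in $\Th$ and thus remains a leaf. The converse inclusion $L(\Th) \subseteq \Gen$ is the only step where an axiom is really needed, and this is where I expect the main (mild) obstacle. I would argue by contradiction: suppose some interior vertex $v \in V^0$ of $T$ becomes a leaf of $\Th$. Then every outgoing edge $(v,c)$ of $v$ in $T$ must lie in $\EHGT$, i.e.\ $t((v,c))=1$ for every child $c$ of $v$. By the definition of transfer edges (as given in the paragraph preceding the definition of $(T;t,\sigma)$), this forces $t(v)=\tri$, so $v$ is an HGT vertex. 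But now axiom (O2) asserts that an HGT vertex has at least one incident non-transfer edge with $t(e)=0$; the only edges whose labels matter for determining whether $v$ keeps a child are the outgoing ones, so $v$ has at least one child in $\Th$, contradicting the assumption that $v$ is a leaf of $\Th$. Hence $L(\Th) = \Gen$.

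For (ii), given any $\ell \in \Gen = L(\Th)$, there is a unique index $j$ with $\ell \in V(\mathcal{T}_j)$, since the $V(\mathcal{T}_i)$ partition $V$. Because $\rho_j$ is the root of $\mathcal{T}_j$, every vertex of $\mathcal{T}_j$ satisfies $x \preceq_{\Th} \rho_j$, and in particular $\ell \preceq_{\Th} \rho_j$, so $\ell \in L_{\Th}(\rho_j)$. Conversely, if $\ell \in L_{\Th}(\rho_i)$ for some $i$, then $\ell$ and $\rho_i$ lie in the same connected component of $\Th$ by the definition of $\preceq_{\Th}$, forcing $i=j$. Combining the two parts, every element of $\Gen$ is contained in exactly one $L_{\Th}(\rho_i)$, which is precisely the claim that $\{L_{\Th}(\rho_1),\dots,L_{\Th}(\rho_k)\}$ is a partition of $\Gen$.
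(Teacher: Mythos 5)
Your proof is correct, and it is essentially the intended argument: the paper does not actually prove this lemma itself but imports it from \cite{N+17}, prefaced by the remark that it is ``based on Axiom (O2)'' --- which is precisely the axiom you invoke at the crux, namely that if every outgoing edge of an interior vertex $v$ lay in $\EHGT$ then the definition of transfer edges would force $t(v)=\tri$, and (O2) would then supply a surviving non-transfer child, a contradiction. One small definitional wrinkle worth noting: the paper defines $L_{\Th}(x)=\{y\in L(\Th)\mid y\prec_{\Th}x\}$ with a \emph{strict} inequality, so when a connected component of $\Th$ is a single extant gene $g$ (which genuinely occurs whenever the edge from $g$'s parent is a transfer edge, e.g.\ a leaf hanging directly off an HGT vertex by the transferred copy), your final step ``$\ell\preceq_{\Th}\rho_j$, hence $\ell\in L_{\Th}(\rho_j)$'' does not literally go through for $\ell=\rho_j$; the statement only holds under the evidently intended non-strict reading (as used for $L_T(x)$ earlier in the paper), which your argument implicitly and reasonably adopts.
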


Lemma  \ref{lem:part_gen} particularly implies that 
$\sT(x) \neq \emptyset$ for all $x\in V(T)$. Note, 
$\Th$ might contain interior vertices (distinct from the root)
that have degree two. Nevertheless, for each $x\preceq_{\Th} y$ in
$\Th$ we have $x\preceq_T y$ in $T$. Hence, partial information (that in
particular is ``undisturbed'' by transfer edges) on the partial ordering of the
vertices in $T$ can be inferred from $\Th$.

\section{Reconciliation Map}

Before we define a reconciliation map that ``embeds'' a
given gene tree into a given species tree we need a slight modification 
of the species tree.
In order to account for duplication events
that occurred before the first speciation event, we need to add an extra vertex
and an extra edge ``above'' the last common ancestor of all species: hence, we
add an additional vertex to $W$ (that is know the new root $\rho_S$ of $S$) and the
additional edge $(\rho_S,\lca_S(\Spe))\in F$. Note that strictly speaking $S$ is
not a phylogenetic tree anymore. In case there is no danger of confusion, we
will from now on refer to a phylogenetic tree on $\Spe$ with this extra edge and
vertex added as a species tree on $\Spe$.

\begin{definition}[DTL-scenario] \label{def:mu} 
  Suppose that $\Spe$ is a set of species, $S=(W,F)$ is a phylogenetic tree on
  $\Spe$, $T=(V,E)$ is a gene tree with leaf set $\Gen$ and that $\sigma:\Gen\to
  \Spe$ and $t:V^0\to \{\bul,\squ,\tri\} \cup \{0,1\}$ are the
  maps described above. Then we say that \emph{$S$ is a species tree for
  $(T;t,\sigma)$} if there is a map $\mu:V\to W\cup F$ such that, for all $x\in
  V$:  
\begin{description}
\item[(M1)] \emph{Leaf Constraint.}  If $x\in \Gen$   then $\mu(x)=\sigma(x)$. 		\vspace{0.03in}
\item[(M2)] \emph{Event Constraint.}
	\begin{itemize}
		\item[(i)]  If $t(x)=\bul$, then  
                  $\mu(x) = \lca_S(\sT(x))$.
		\item[(ii)] If $t(x) \in \{\squ, \tri\}$, then $\mu(x)\in F$. 
		\item[(iii)] If $t(x)=\tri$ and $(x,y)\in \EHGT$, 
						 then $\mu(x)$ and $\mu(y)$ are incomparable in $S$. 
 	\end{itemize} \vspace{0.03in}
\item[(M3)] \emph{Ancestor Constraint.}		\\	
		Let $x,y\in V$ with $x\prec_{\Th} y$. 
		Note, the latter implies that the path connecting $x$ and $y$ in $T$
		does not contain transfer edges.
		We distinguish two cases:
	\begin{itemize}
		\item[(i)] If $t(x),t(y)\in \{\squ, \tri\}$, then $\mu(x)\preceq_S \mu(y)$, 
		\item[(ii)] otherwise, i.e., at least one of $t(x)$ and $t(y)$ is a speciation $\bul$, 
						$\mu(x)\prec_S\mu(y)$.
	\end{itemize}
\end{description}
We call $\mu$ the \emph{reconciliation map} from $(T;t,\sigma)$ to $S$. 
\end{definition}                

Definition \ref{def:mu} is a natural generalization of the map defined in
\cite{HHH+12}, that is, in the absence of horizontal gene transfer, Condition
(M2.iii) vanishes and thus, the proposed reconciliation map precisely coincides
with the one given in \cite{HHH+12}.
In case that the event-labeling of $T$ is unknown, but a species tree $S$ is
given, the authors in \cite{THL:11,BAK:12} gave an axiom set, called
DTL-scenario, to reconcile $T$ with $S$. This reconciliation is then used to
infer the event-labeling $t$ of $T$. 
The ``usual'' DTL axioms explicitly refer to binary,
fully resolved gene and species trees. We therefore use a different axiom set
that is, nevertheless, equivalent to DTL-scenarios
in case the considered gene trees are binary \cite{N+17}.

Condition (M1) ensures that each leaf of $T$, i.e., an extant gene in $\Gen$, is
mapped to the species in which it resides. 
Condition (M2.i) and (M2.ii) ensure
that each vertex of $T$ is either mapped to a vertex or an edge in $S$ such that
a vertex of $T$ is mapped to an interior vertex of $S$ if and only if it is a
speciation vertex. We will discuss (M2.i) in further detail below. 
Condition (M2.iii) maps the vertices of a transfer edge
in a way that they are incomparable in the species tree and is used
to satisfy axiom (O3). 
Condition (M3) refers only to the connected components of $\Th$ and
is used to preserve the ancestor order $\preceq_T$ of $T$ along the paths
that do not contain transfer edges is preserved. 

It needs to be discussed, why one \emph{should} map a speciation vertex $x$ to 
$\lca_S(\sT(x))$ as required in (M2.i). The next lemma shows, that one
\emph{can} put $\mu(x) = \lca_S(\sT(x))$.  
\begin{lemma}[\cite{N+17}]
	Let $\mu$ be a reconciliation map  from $(T;t,\sigma)$ to $S$
	that satisfies (M1) and (M3), then
   $\mu(u)\succeq_S \lca_S(\sT(u))$ for any $u\in V(T)$. 
\label{lem:cond-mu-lca}
\end{lemma}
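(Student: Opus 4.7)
My plan is to prove the bound by induction on the height of $u$ in the component-forest $\Th$, moving from the leaves of $\Th$ upward. The base case is handled directly by (M1), and the inductive step relies on propagating the bound from children to parents through (M3), using the standard identity $\lca_S(A\cup B)=\lca_S(\lca_S(A),\lca_S(B))$ mentioned in the preliminaries.

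For the base case, suppose $u$ is a leaf of $T$. Then $L_{\Th}(u)=\{u\}$, so $\sT(u)=\{\sigma(u)\}$ and $\lca_S(\sT(u))=\sigma(u)$. By (M1), $\mu(u)=\sigma(u)$, giving $\mu(u)\succeq_S\lca_S(\sT(u))$ trivially (with equality).

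For the inductive step, let $u\in V(T)$ be an interior vertex and assume the statement holds for every vertex strictly below $u$ in $\Th$. Let $w_1,\dots,w_j$ be the children of $u$ in $T$ that are joined to $u$ by \emph{non-transfer} edges. Axiom (O2) guarantees $j\ge 1$: a speciation or duplication vertex has only non-transfer children, and an HGT vertex has at least one non-transfer child by (O2). Because removing the transfer edges leaves the $w_i$ in the same connected component of $\Th$ as $u$, the set of leaves of $\Th$ reachable from $u$ decomposes as $L_{\Th}(u)=\bigcup_{i=1}^{j} L_{\Th}(w_i)$, whence $\sT(u)=\bigcup_{i=1}^{j}\sT(w_i)$. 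For each $i$, since $w_i\prec_{\Th} u$, condition (M3) yields $\mu(w_i)\preceq_S\mu(u)$ (cases (M3.i) and (M3.ii) both produce this, the second even strictly). Combining this with the inductive hypothesis $\mu(w_i)\succeq_S\lca_S(\sT(w_i))$ shows that $\mu(u)$ is an $S$-ancestor of every $\lca_S(\sT(w_i))$. Hence $\mu(u)$ is an ancestor of their $\lca_S$, and by the identity on iterated $\lca$'s we conclude
\[
\mu(u)\ \succeq_S\ \lca_S\bigl(\{\lca_S(\sT(w_i))\}_{i=1}^{j}\bigr)\ =\ \lca_S\Bigl(\bigcup_{i=1}^{j}\sT(w_i)\Bigr)\ =\ \lca_S(\sT(u)).
\]

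The only delicate points are bookkeeping rather than deep: verifying that the decomposition $L_{\Th}(u)=\bigcup_i L_{\Th}(w_i)$ is correct (leaf children contribute themselves, and transfer children are excluded precisely because they lie in other components of $\Th$), and checking that (M3) applies even when some $w_i$ is a leaf—here $\mu(w_i)=\sigma(w_i)$ is a leaf of $S$ and the inequality $\mu(w_i)\preceq_S\mu(u)$ still holds under the natural reading of (M3) (as noted in the paper, leaves are mapped by (M1), and (M3) preserves the ancestor order along non-transfer paths). The existence of at least one non-transfer child, guaranteed by (O2), is essential; without it the inductive step would have no children to draw information from, and indeed Lemma~\ref{lem:part_gen} would fail.
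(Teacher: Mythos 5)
Your proof is correct. Note that the paper itself does not prove this lemma---it is imported from \cite{N+17}---so there is no in-paper argument to compare against; your bottom-up induction on $\Th$ (base case from (M1), propagation through (M3) combined with the iterated-$\lca$ identity, and the appeal to (O2) to guarantee a non-transfer child so that $L_{\Th}(u)=\bigcup_i L_{\Th}(w_i)$ is a nonempty decomposition) is the standard and sound way to establish it.
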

Condition (M2.i) implies in particular the weaker property ``(M2.i') if
$t(x)=\bul$ then $\mu(x)\in W$''. In the light of
Lemma~\ref{lem:cond-mu-lca}, $\mu(x)=\lca_S(\sT(x))$ is the lowest possible
choice for the image of a speciation vertex. 
Instead of considering the
possibly exponentially many reconciliation maps for which
$\mu(x)\succ_S\lca_S(\sT(x))$ for speciation vertices $x$ is allowed
we restrict our attention to those that satisfy (M2.i) only. 
In particular, as we shall see later, there is a neat characterization
of maps that satisfy (M2.i) that does, however, 
not work for maps with ``relaxed'' (M2.i). 

Moreover, we have the following result, which is a mild
generalization of \cite{N+17}. 
\begin{lemma}
	Let $\mu$ be a reconciliation map from a gene tree $(T;t,\sigma)$ to $S$. 
	\begin{enumerate}
		\item 	If $v,w\in V(T)$ are in the same connected component of $\Th$,   then 
					$\mu(\lca_{\Th}(v,w)) \succeq_S \lca_S(\mu(v),\mu(w))$.
					\label{item:lca}
		\item 	If $(T;t,\sigma)$ is a binary gene tree and $x$ a speciation vertex with children $v,w$ in $T$, then  
				   then $\mu(v)$ and $\mu(w)$ are incomparable in $S$. \label{item:speci} 
		\end{enumerate}
\label{lem:cond-mu}
\end{lemma}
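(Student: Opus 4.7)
The two items can be addressed largely independently: item~1 reduces to a direct application of (M3), while item~2 requires combining (O3.a), (M2.i), and Lemma~\ref{lem:cond-mu-lca} via a short contradiction argument.

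For item~1, the plan is to set $u := \lca_{\Th}(v,w)$, which is well-defined because $v$ and $w$ share a connected component of $\Th$, and observe that $v \preceq_{\Th} u$ and $w \preceq_{\Th} u$. Either branch of (M3) yields at least the weak inequalities $\mu(v) \preceq_S \mu(u)$ and $\mu(w) \preceq_S \mu(u)$, with the degenerate cases $v=u$ or $w=u$ being trivial. Hence $\mu(u)$ is a common $\preceq_S$-ancestor of $\mu(v)$ and $\mu(w)$ in the extended order on $W \cup F$, and therefore dominates their least common ancestor, giving $\mu(u) \succeq_S \lca_S(\mu(v), \mu(w))$.

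For item~2, let $x$ be a speciation vertex with children $v, w$ in the binary tree $T$. Since $t(x) = \bul$ forces both edges $(x,v)$ and $(x,w)$ to be non-transfer, the vertices $v, w, x$ all lie in the same connected component of $\Th$, so $\sT(x) = \sT(v) \cup \sT(w)$, with $\sT(v) \cap \sT(w) = \emptyset$ by (O3.a) combined with binarity, and both non-empty by Lemma~\ref{lem:part_gen}. Setting $a := \lca_S(\sT(v))$ and $b := \lca_S(\sT(w))$, condition (M2.i) gives $\mu(x) = \lca_S(\sT(x)) = \lca_S(a, b)$. The key step is to show that $a$ and $b$ are incomparable in $S$: if not, say $a \preceq_S b$, then $\mu(x) = b$, and Lemma~\ref{lem:cond-mu-lca} provides $\mu(w) \succeq_S b = \mu(x)$, contradicting the strict inequality $\mu(w) \prec_S \mu(x)$ that (M3.ii) imposes on $w \prec_{\Th} x$ with $t(x) = \bul$. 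Once this incomparability is in hand, Lemma~\ref{lem:cond-mu-lca} and (M3.ii) place $\mu(v)$ strictly between $a$ (inclusive) and $c := \lca_S(a,b) = \mu(x)$ (exclusive), and symmetrically for $\mu(w)$ relative to $b$; because these two segments of $S$ share only the endpoint $c$, the images $\mu(v)$ and $\mu(w)$ end up incomparable.

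The main subtlety is really only bookkeeping: carefully propagating the extended order $\preceq_S$ through the possibly edge-valued images $\mu(v)$ and $\mu(w)$. The conventions set up earlier for $\preceq_T$ on mixed vertex–edge pairs absorb this issue cleanly, so that once the incomparability of $a$ and $b$ is established, the final incomparability of $\mu(v)$ and $\mu(w)$ follows immediately from the tree geometry of $S$.
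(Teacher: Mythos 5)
Your proof is correct, and the interesting comparison is item by item. For item~\ref{item:lca} you follow essentially the paper's route: both arguments rest on the fact that (M3) transports the order $\preceq_{\Th}$ below $u=\lca_{\Th}(v,w)$ into $\preceq_S$, so that $\mu(u)$ sits above both $\mu(v)$ and $\mu(w)$. The paper phrases this by splitting the $v$--$w$ path at $u$ and tracking the two image intervals in $S$, whereas you short-circuit the case distinction with the standard fact that any common ancestor dominates the least common ancestor; your version is a little cleaner, and both share the same mild abuse of extending $\lca_S$ to possibly edge-valued images, which the conventions on $\preceq_S$ over $W\cup F$ absorb. The genuine divergence is item~\ref{item:speci}: the paper does not prove it at all but defers to \cite{N+17}, while you supply a complete, self-contained argument. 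That argument checks out: since $t(x)=\bul$ neither child edge is a transfer edge, so $\sT(x)=\sT(v)\cup\sT(w)$ with the two parts disjoint by (O3.a) plus binarity and nonempty by Lemma~\ref{lem:part_gen}; (M2.i) together with $\lca(A\cup B)=\lca(\lca(A),\lca(B))$ gives $\mu(x)=\lca_S(a,b)$ for $a=\lca_S(\sT(v))$, $b=\lca_S(\sT(w))$; the clash between $\mu(w)\succeq_S b$ (Lemma~\ref{lem:cond-mu-lca}) and $\mu(w)\prec_S\mu(x)$ (M3.ii) rules out comparability of $a$ and $b$; and sandwiching $\mu(v)$ and $\mu(w)$ strictly below $\lca_S(a,b)$ on the two disjoint branches forces them to be incomparable. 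What your approach buys is independence from the external reference at the cost of roughly a paragraph; what the paper buys by citing is brevity.
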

\begin{proof}
	Let $v,w\in V(T)$ be in the same connected component of $\Th$. 
	Assume that $v$ and $w$ are comparable in $\Th$ and 
	that w.l.o.g.\ $v\succ_{\Th} w$. Condition (M3) implies that 
	$\mu(v)\succeq_S\mu(w)$. Hence, 
	$v = \lca_{\Th}(v,w)$ and 
	$\mu(v) = \lca_S(\mu(v),\mu(w))$ and we are done. 

	Now assume that  $v$ and $w$ are incomparable in $\Th$. 
	Consider the unique path $P$ connecting $w$ with $v$ in $\Th$. This path $P$
	is uniquely subdivided into a path $P'$ and a path $P''$ from
	$\lca_{\Th}(v,w)$ to $v$ and $w$, respectively. Condition (M3) implies that
	the images of the vertices of $P'$ and $P''$ under $\mu$, resp., are ordered
	in $S$ with regards to $\preceq_S$ and hence, are contained in the intervals
	$Q'$ and $Q''$ that connect $\mu(\lca_{\Th}(v,w))$ with $\mu(v)$ and
	$\mu(w)$, respectively. In particular, $\mu(\lca_{\Th}(v,w))$ is the largest
	element (w.r.t.\ $\preceq_S$) in the union of $Q'\cup Q''$ which contains the
	unique path from $\mu(v)$ to $\mu(w)$ and hence also $\lca_S(\mu(v),\mu(w))$.
		
	Item \ref{item:speci} was already proven in \cite{N+17}.
\end{proof}

Assume now that there is a reconciliation map $\mu$ from $(T;t,\sigma)$ to $S$. 
From a biological point of view, however, it is necessary to reconcile a gene
tree with a species tree such that genes do not ``travel through time'', a
particular issue that must be considered whenever $(T;t,\sigma)$  contains HGT, 
see Figure \ref{fig:least} for an example. 

\begin{definition}[Time Map]\label{def:time-map}
  The map $\tT: V(T) \rightarrow \mathbb{R}$ is a time map for the 
  rooted tree $T$ if $x\prec_T y$ implies $\tT(x)>\tT(y)$ for all 
  $x,y\in V(T)$. 
\end{definition}

\begin{definition} \label{def:tc-mu} A reconciliation map $\mu$ from
  $(T;t,\sigma)$ to $S$ is \emph{time-consistent} if there are time maps
  $\tau_T$ for $T$ and $\tau_S$ for $S$ for all $u\in V(T)$ satisfying the
  following conditions:
  \begin{description}
  \item[(T1)] If $t(u) \in \{\bul, \odot \}$, then 
    $\tau_T(u) = \tau_S(\mu(u))$. \label{bio1}
  \item[(T2)] If $t(u)\in \{\squ,\tri \}$ and, thus
    $\mu(u)=(x,y)\in E(S)$, \label{bio2} then
    $\tau_S(y)>\tau_T(u)>\tau_S(x)$. 
\end{description}
\end{definition} 

Condition (T1) is used to identify the time-points of speciation vertices 
and leaves $u$ in the gene tree with the time-points of their respective 
images $\mu(u)$ in the species trees. Moreover, 
duplication or HGT vertices $u$ are mapped to edges $\mu(u)=(x,y)$ in $S$
and the time point of $u$ must thus lie between the time
points of $x$ and $y$ which is ensured by Condition (T2).
N{\o}jgaard et al.\ \cite{N+17} designed an $O(|V(T)|\log(|V(S)|))$-time algorithm 
to check whether a given reconciliation map $\mu$ is time-consistent, and an algorithm with the same
time complexity for the construction of a time-consistent reconciliation
map, provided one exists.
Clearly, a necessary condition for the existence of time-consistent reconciliation maps from 
$(T;t,\sigma)$ to $S$ is the  existence of \emph{some} reconciliation map 
$(T;t,\sigma)$ to $S$. 
In the next section,  we first characterize the existence of reconciliation maps
and discuss open  time-consistency problems.


\section{From Gene Trees to Species Trees}

Since a gene tree $T$ is uniquely determined by its induced triple set
$\mathcal{R}(T)$, it is reasonable to expect that a lot of information on the
species tree(s) for $(T;t, \sigma)$ is contained in the images of the triples in
$\mathcal{R}(T)$, (or more precisely their leaves) under $\sigma$. However, not
all triples in $\mathcal{R}(T)$ are informative, see Figure \ref{fig:reconc} for
an illustrative example. In the absence of HGT, it has already been shown by
Hernandez-Rosales et al.\ \cite{HHH+12} that the informative triples $r\in
\mathcal{R}(T)$ are precisely those that are rooted at a speciation event and
where the genes in $r$ reside in three distinct species. However, in the
presence of HGT we need to further subdivide the informative triples as follows.

\begin{definition} \label{def:triples} 
	Let $(T;t,\sigma)$ be a given event-labeled gene tree with respective
	set of transfer-edges $\EHGT = \{e_1,\dots,e_h\}$  and $\Th$	as defined above. 
	We define 
	\begin{align*}
	\Rd(\Th) = \{\rt{ab|c} \in \mc{R}(\Th) \colon&\sigma(a),\sigma(b),\sigma(c) \\ &\textrm{are pairwise distinct} \}
	\end{align*}
	as the subset of all triples displayed in $\Th$ such that the leaves are from pairwise distinct species.	

	\noindent
	Let
	\begin{equation*}
		\mc{R}_0(\Th) \coloneqq \{\rt{ab|c} \in \Rd(\Th) \colon  t(\lca_{\Th}(a,b,c)) = \bul \}	
	\end{equation*}
	be the set of triples in  $\Rd(\Th)$ that are rooted at a speciation event.

	\noindent
	For each $e_i=(x,y) \in \EHGT$ define 
	\begin{equation*}	
		\begin{split}
			\mc{R}_i(\Th)	\coloneqq  \{ \rt{ab|c} \colon& \sigma(a),\sigma(b),\sigma(c)  \textrm{ are pairwise distinct} \\
								&\textrm{and either }  a,b\in L_{\Th}(x), c\in L_{\Th}(y) \\ &\text{or } c\in L_{\Th}(x), a,b\in L_{\Th}(y)  \}.
		\end{split}
	\end{equation*}	
	Hence, $\mc{R}_i(\Th)$ contains a triple $\rt{ab|c}$ for every 
	$a,b\in L_{\Th}(x), c\in L_{\Th}(y)$ that reside in pairwise distinct species. Analogously,  for any $a,b\in L_{\Th}(y), c\in L_{\Th}(x)$ there is a triple $\rt{ab|c}\in \mc{R}_i(\Th)$, if
	$\sigma(a),\sigma(b),\sigma(c)$ are pairwise distinct. \smallskip

	\noindent
	The \emph{informative triples} of $T$ are comprised in the set 	
	$\mathcal{R}(T;t,\sigma) = \cup_{i=0}^h \mc{R}_i(\Th)$.

	\noindent
	Finally, we define the informative species triple set 
	\[\mathcal{S}(T;t,\sigma)\coloneqq \{\rt{\sigma(a)\sigma(b)|\sigma(c)} \colon \rt{ab|c} \in \mathcal{R}(T;t,\sigma) \}\]
	that can be inferred from the informative triples of $(T;t,\sigma)$.
\end{definition}

\begin{figure*}[tbp]
  \begin{center}
    \includegraphics[width=.7\textwidth]{./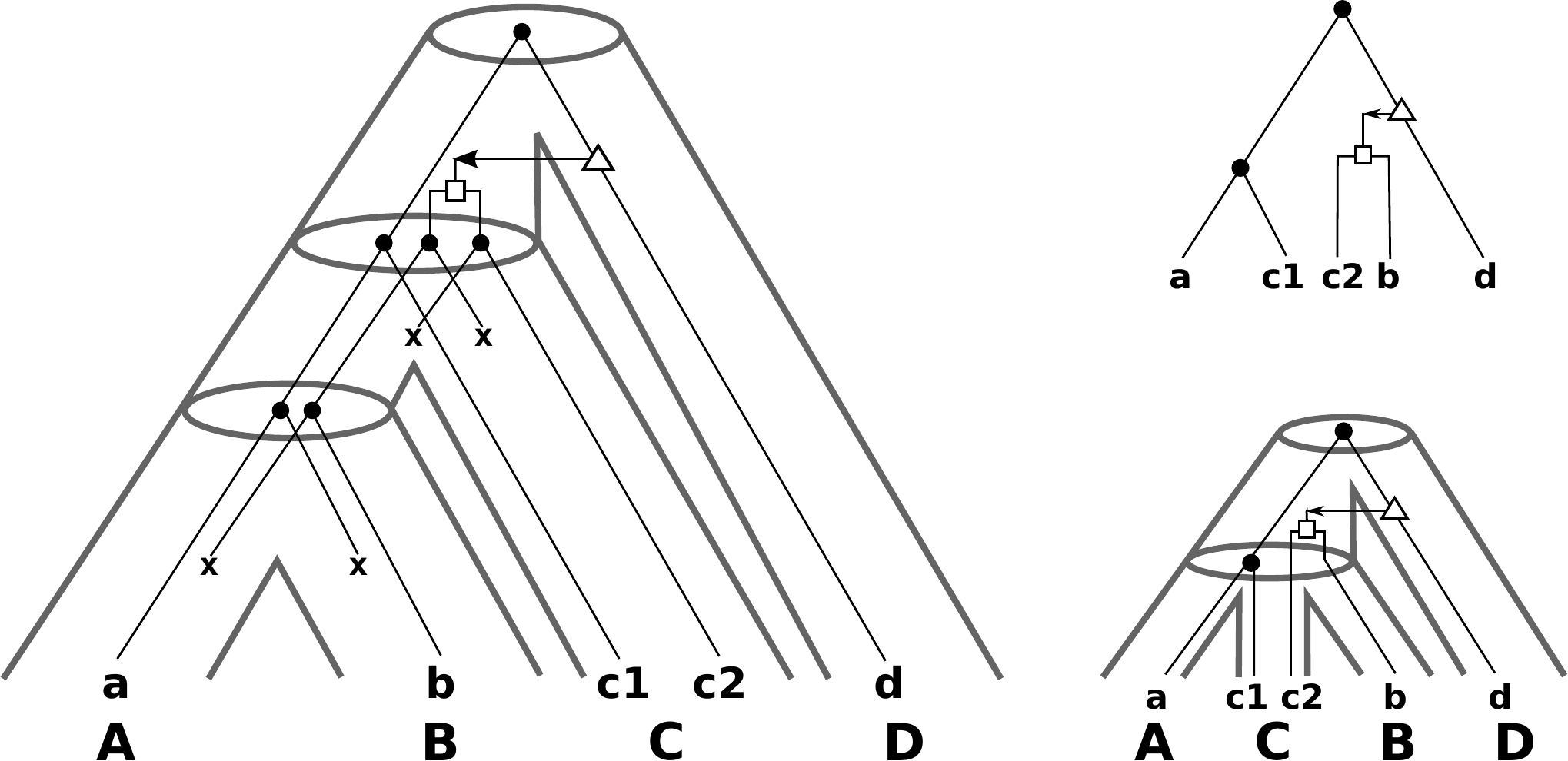}
  \end{center}
	\caption{\emph{Left:} 
		An example of a ``true'' history of a gene tree that evolves along the
		(tube-like) species tree (taken from \cite{Hellmuth:15a}). The set of
		extant genes $\Gen$ comprises $a,b,c_1,c_2$ and $d$ and $\sigma$ maps each
		gene in $\Gen$ to the species (capitals below the genes) $A,B,C,D\in
		\Spe$.	 
		\emph{Upper Right:}  
		The observable gene tree $(T;t,\sigma)$ is shown. To derive $\mc{S}(T;t,\sigma)$ 	
		we cannot use the triples  $\mc{R}_0(T)$, that is, we need to remove the transfer
		edges. To be more precise, 	
		 if we would consider $\mc{R}_0(T)$
		we obtain the triples $\rt{ac_1|d}$ and $\rt{c_2d|a}$ which leads to the
		two contradicting species triples $\rt{AC|D}$ and $\rt{CD|A}$. Thus, we
		restrict $\mc{R}_0$ to $\Th$ and obtain $\mc{R}_0(\Th) = \{\rt{ac_1|d}\}$.
		However, this triple alone would not provide enough information to obtain
		a species tree such that a valid reconciliation map $\mu$ can be
		constructed. Hence, we take $\mc{R}_1(\Th)=\{\rt{bc_2|d}\}$ into account
		and obtain $\mc{S}(T;t,\sigma) = \{\rt{AC|D},\rt{BC|D}\}$. 
		\emph{Lower Right:} 
				A least resolved species tree $S$ (obtained with \texttt{BUILD})
		that displays all triples in $\mc{S}(T;t,\sigma)$ together with the
		reconciled gene tree $(T;t,\sigma)$ is shown. Although $S$ does not
		display the triple $\rt{AB|C}$ as in the true history, this tree $S$ does
		not pretend a higher resolution than actually supported by $(T;t,\sigma)$.
		Clearly, as more gene trees (gene families) are available as more
		information about the resolution of the species tree can be provided. 
    }
	\label{fig:reconc}
\end{figure*}

\subsection{Binary Gene Trees}


In this section, we will be concerned
only with binary, i.e., ``fully resolved'' gene trees,
if not stated differently. 
This is justified by
the fact that, generically, a speciation or duplication event
instantaneously generates exactly two offspring. However, we will allow
also non-binary species tree to model incomplete knowledge of the exact
species phylogeny. Non-binary gene trees are discussed in Section \ref{sec:restMu}.

Hernandez et al.\  \cite{HHH+12} established the following characterization for the HGT-free case. 
\begin{theorem}
	For a given gene tree $(T;t, \sigma)$ on $\Gen$ that \emph{does not} contain
	HGT and $\mathfrak{S}\coloneqq \{\rt{\sigma(a)\sigma(b)|\sigma(c)} \colon
	\rt{ab|c} \in \mathcal{R}_0(T)\}$, the following statement is satisfied:
		
	There is a species tree on $\Spe = \sigma(\Gen)$ for $(T;t, \sigma)$
	 if	and only if the triple set $\mathfrak{S}$ is consistent.
\label{thm:charact-old}
\end{theorem}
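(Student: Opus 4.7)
\emph{Forward direction.} Suppose a reconciliation map $\mu$ from $(T;t,\sigma)$ to some species tree $S$ exists. I would show that $S$ itself displays every triple in $\mathfrak S$, which proves consistency. Fix $\rt{ab|c}\in\mathcal R_0(T)$ and put $u:=\lca_T(a,b,c)$, a speciation. Since $T$ is binary and HGT-free (so $\Th=T$), $u$ has exactly two children $u_1,u_2$ with $a,b\preceq_T u_1$ and $c\preceq_T u_2$. By Lemma~\ref{lem:cond-mu}(\ref{item:speci}), $\mu(u_1)$ and $\mu(u_2)$ are incomparable in $S$, while Lemma~\ref{lem:cond-mu}(\ref{item:lca}) together with (M1) yields $\mu(u_1)\succeq_S\lca_S(\sigma(a),\sigma(b))$ and $\mu(u_2)\succeq_S\sigma(c)$. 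Hence these two images lie in disjoint subtrees of $\lca_S(\mu(u_1),\mu(u_2))$, giving $\lca_S(\sigma(a),\sigma(b))\prec_S\lca_S(\sigma(a),\sigma(b),\sigma(c))$, so $S$ displays $\rt{\sigma(a)\sigma(b)|\sigma(c)}$.

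\emph{Backward direction.} For the converse I would assume $\mathfrak S$ is consistent and apply \texttt{BUILD} to obtain a phylogenetic tree $S$ on $\Spe$ that displays $\mathfrak S$, augmented with the additional root-edge from the preliminaries. The plan is to set $\mu(g)=\sigma(g)$ for leaves $g\in\Gen$, $\mu(u)=\lca_S(\sT(u))$ when $t(u)=\bul$, and $\mu(u)=(p_u,\lca_S(\sT(u)))$ when $t(u)=\squ$, where $p_u$ is the parent of $\lca_S(\sT(u))$ in the augmented $S$. Then (M1), (M2.i) and (M2.ii) hold by construction and (M2.iii) is vacuous. Using $\sT(x)\subseteq\sT(y)$ for $x\prec_T y$ and the definition of $\preceq_S$ on edges, the verification of (M3) reduces to the single incomparability claim below.

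\emph{Claim and main obstacle.} The claim is: for every speciation $u\in V(T)$ with binary children $u_1,u_2$, the vertices $\lca_S(\sT(u_1))$ and $\lca_S(\sT(u_2))$ are incomparable in $S$. Once this is granted, the strict inequalities in (M3) follow because $\lca_S(\sT(x))\preceq_S\lca_S(\sT(u_i))\prec_S\lca_S(\sT(u))$ whenever $x\preceq_T u_i$ with $u_i$ a child of a speciation $u$, and for duplications the weak inequality is immediate from $\sT(x)\subseteq\sT(y)$. The hard part will be to prove the claim. By (O3.a), $\sT(u_1)\cap\sT(u_2)=\emptyset$; assume for contradiction that $X:=\lca_S(\sT(u_1))\preceq_S Y:=\lca_S(\sT(u_2))$. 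If $Y$ is a leaf of $S$ then $X=Y$ forces $\sT(u_1)=\sT(u_2)=\{Y\}$, contradicting disjointness. Otherwise $Y$ has at least two children in $S$ and, since $\lca_S(\sT(u_2))=Y$, the set $\sT(u_2)$ meets at least two of their subtrees; I would then choose $B_1,B_2\in\sT(u_2)$ in distinct such subtrees together with any $A\in\sT(u_1)$, and lift them to leaves $b_1,b_2\in L_T(u_2)$ and $a\in L_T(u_1)$. The triple $\rt{b_1b_2|a}$ lies in $\mathcal R_0(T)$ (it is rooted at the speciation $u$ with three distinct species), so $\rt{B_1B_2|A}\in\mathfrak S$ must be displayed by $S$. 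But $\lca_S(B_1,B_2)=Y$ and $A\preceq_S X\preceq_S Y$ force $\lca_S(A,B_1,B_2)=Y$, contradicting display. The case $Y\preceq_S X$ is symmetric, and everything outside this argument is routine bookkeeping.
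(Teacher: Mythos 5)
Your proposal is correct and follows essentially the same route as the paper: the forward direction mirrors the proof of Lemma~\ref{lem:mu-triple} (restricted to $\mathcal{R}_0$, using Lemma~\ref{lem:cond-mu} and (M3)), and the backward direction uses exactly the construction (S1)--(S3) from Lemma~\ref{lem:triple-mu}, with (O3.a) supplying the triples in $\mathcal{R}_0$ that force the required order relations in $S$. The only presentational difference is that you package the key step of the backward direction as a single incomparability claim proved by contradiction (choosing $B_1,B_2$ in distinct child-subtrees of $\lca_S(\sT(u_2))$), whereas the paper derives the equivalent strict inequalities $\lca_S(\sT(u_i))\prec_S\lca_S(\sT(u))$ directly by quantifying over all informative triples; in the binary case these are interchangeable.
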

We emphasize that the results established in \cite{HHH+12} are only valid
for binary gene trees, although this was not explicitly stated.
For an example that shows that Theorem \ref{thm:charact-old} 
does is no always satisfied for non-binary gene trees see 
Figure \ref{fig:counterBinary}. 
Lafond and El-Mabrouk \cite{Lafond2014,LDEM:16} established
a similar result as in Theorem \ref{thm:charact-old} by using
only species triples that can be obtained directly
from a given orthology/paralogy-relation. However, they require a stronger
version of axiom (O3.a), that is, the images of all children of a
speciation vertex must be pairwisely incomparable in the species tree. 
We, too, will use this restriction in Section \ref{sec:restMu}

In what follows, we generalize the latter result and show that consistency of
$\mathcal{S}(T;t,\sigma)$ characterizes whether there is a species tree $S$ for
$(T;t,\sigma)$ even if $(T;t,\sigma)$ contains HGT.

\begin{lemma}
	If $\mu$ is a reconciliation map from a gene tree $(T;t,\sigma)$ to 
	a species tree $S$
	and $\rt{ab|c} \in\mathcal{R}(T;t,\sigma)$, then 
	$\rt{\sigma(a)\sigma(b)|\sigma(c)}$ is displayed in $S$.
\label{lem:mu-triple}  
\end{lemma}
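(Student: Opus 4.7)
The plan is a case analysis based on which $\mc{R}_i(\Th)$ the informative triple $\rt{ab|c}$ belongs to. By Definition~\ref{def:triples}, either $\rt{ab|c} \in \mc{R}_0(\Th)$ (the triple is displayed by $\Th$ and rooted at a speciation vertex $u := \lca_{\Th}(a,b,c)$), or $\rt{ab|c} \in \mc{R}_i(\Th)$ for some transfer edge $e_i = (x,y)\in \EHGT$, in which case $\{a,b\}$ and $\{c\}$ are split between the leaf sets $L_{\Th}(x)$ and $L_{\Th}(y)$. In both cases the strategy is to pin down two vertices $p,q \in V(T)$ whose images $\mu(p),\mu(q)$ are incomparable in $S$ and which satisfy $\sigma(a),\sigma(b) \preceq_S \mu(p)$ and $\sigma(c) \preceq_S \mu(q)$; once this is achieved, the displayed species triple will follow uniformly.

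In the speciation case, since $T$ is binary, I would take $p=v_1$ and $q=v_2$, where $v_1,v_2$ are the two children of $u$ that separate $\{a,b\}$ from $c$ according to how $\rt{ab|c}$ is displayed in $\Th$. Incomparability of $\mu(v_1)$ and $\mu(v_2)$ is then Lemma~\ref{lem:cond-mu}(\ref{item:speci}). In the HGT case, I would take $p=x$ and $q=y$; incomparability of $\mu(x)$ and $\mu(y)$ is immediate from (M2.iii). In both cases, the containments $\sigma(a),\sigma(b) \preceq_S \mu(p)$ and $\sigma(c) \preceq_S \mu(q)$ follow by iterating (M3) along the transfer-free paths inside $\Th$ (observing that each such path really does live inside a single connected component) and combining with (M1) at the leaves, in the same manner as already used inside the proof of Lemma~\ref{lem:cond-mu}(\ref{item:lca}).

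To conclude once such $p,q$ are available, set $s := \lca_S(\sigma(a),\sigma(b))$, so $s \preceq_S \mu(p)$. If one had $\sigma(c) \preceq_S s$, then $\sigma(c)$ would lie below both $\mu(p)$ and $\mu(q)$, forcing these images onto the unique root-path of the leaf $\sigma(c)$ and contradicting their incomparability. Since $\sigma(a)\neq \sigma(b)$ makes $s$ an interior vertex of $S$, the leaf $\sigma(c)$ cannot satisfy $\sigma(c)\succeq_S s$ either, so $\sigma(c)$ is incomparable with $s$. Therefore $\lca_S(\sigma(a),\sigma(b),\sigma(c)) = \lca_S(s,\sigma(c)) \succ_S s = \lca_S(\sigma(a),\sigma(b))$, which is precisely the statement that $S$ displays $\rt{\sigma(a)\sigma(b)|\sigma(c)}$. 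I expect the one technical point needing care to be the translation of incomparabilities sitting in $\Th$ into incomparabilities of the $\mu$-images in $S$, but as outlined above this reduces cleanly in both cases to Lemma~\ref{lem:cond-mu} together with the axioms (M1)--(M3).
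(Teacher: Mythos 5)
Your proposal is correct and follows essentially the same route as the paper's own proof: the same split into the $\mc{R}_0$ (speciation-rooted) case handled via Lemma~\ref{lem:cond-mu}(\ref{item:speci}) and the $\mc{R}_i$ (transfer-edge) case handled via (M2.iii), with (M1), (M3) and Lemma~\ref{lem:cond-mu}(\ref{item:lca}) supplying $\lca_S(\sigma(a),\sigma(b))\preceq_S\mu(p)$ and $\sigma(c)\preceq_S\mu(q)$. Your closing paragraph merely spells out in more detail the final inference (incomparability of $\sigma(c)$ and $\lca_S(\sigma(a),\sigma(b))$ implies the triple is displayed) that the paper states in one line.
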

\begin{proof}
	Recall that $\Gen$ is the leaf set of $T=(V,E)$ and, by Lemma \ref{lem:part_gen}, 
	of $\Th$. 	
	Let $\{a,b,c\} \in
	\binom{\Gen}{3}$ and assume w.l.o.g.\ $\rt{ab|c} \in
	\mathcal{R}(T;t,\sigma)$.

	\smallskip
	First assume that $\rt{ab|c} \in \mc{R}_0$, that is $\rt{ab|c}$ is displayed
	in $\Th$ and $t(\lca_{\Th}(a,b,c)) = \bul$. For simplicity set
	$u=\lca_{\Th}(a,b,c)$ and let $x,y$ be its children in $\Th$. 
	Since $\rt{ab|c} \in \mc{R}_0$, we can assume that w.l.o.g.\ 
	$a,b\in L_{\Th}(x)$ and $c\in L_{\Th}(y)$. 	
	Hence,  $x\succeq_{\Th}	\lca_{\Th}(a,b)$ and $y\succeq_{\Th} c$. 
   Condition (M3) implies that
	$\mu(y)\succeq_S \mu(c) = \sigma(c)$. Moreover, Condition (M3) and Lemma
	\ref{lem:cond-mu}(\ref{item:lca}) imply that $\mu(x)\succeq_S
	\mu(\lca_{\Th}(a,b)) \succeq_S \lca_S(\mu(a),\mu(b)) =
	\lca_S(\sigma(a),\sigma(b))$. 
Since $t(u)=\bul$, we can apply 	Lemma \ref{lem:cond-mu}(\ref{item:speci}) 
	and conclude that $\mu(x)$ and $\mu(y)$ are incomparable in $S$. Hence,
	$\sigma(c)$ and $\lca_S(\sigma(a),\sigma(b))$ are incomparable. Thus, the
	triple $\rt{\sigma(a)\sigma(b)|\sigma(c)}$ must be displayed in $S$.

	Now assume that $\rt{ab|c} \in \mc{R}_i$ for some transfer edge $e_i = (x,y)\in\EHGT$. 
	For $e_i = (x,y)$ we either have $a,b\in L_{\Th}(x)$ and $c\in L_{\Th}(y)$ or $c\in
	L_{\Th}(x)$ and $a,b\in L_{\Th}(y)$. W.l.o.g.\ let 
	 $a,b\in L_{\Th}(x)$ and $c\in L_{\Th}(y)$. Thus, $x\succeq_{\Th}
	\lca_{\Th}(a,b)$ and $y\succeq_{\Th} c$. Condition (M3) implies that
	$\mu(y)\succeq_S \mu(c) = \sigma(c)$. Moreover, Condition (M3) and Lemma
	\ref{lem:cond-mu}(\ref{item:lca}) imply that $\mu(x)\succeq_S
	\mu(\lca_{\Th}(a,b)) \succeq_S \lca_S(\mu(a),\mu(b)) =
	\lca_S(\sigma(a),\sigma(b))$. Since $t(x)=\tri$, we can apply (M2.iii)
	and conclude that $\mu(x)$ and $\mu(y)$ are incomparable in $S$. Hence,
	$\sigma(c)$ and $\lca_S(\sigma(a),\sigma(b))$ are incomparable. Thus, the
	triple $\rt{\sigma(a)\sigma(b)|\sigma(c)}$ must be displayed in $S$.
\hfill \end{proof}

\begin{lemma}	
	Let $S=(W,F)$ be a species tree on $\Spe$. 
	Then there is reconciliation map 	$\mu$  from a  gene tree $(T;t,\sigma)$ to $S$
	whenever $S$ displays all triples in $\mathcal{S}(T;t,\sigma)$.
\label{lem:triple-mu}
\end{lemma}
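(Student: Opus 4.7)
Plan: I will construct a reconciliation map $\mu$ explicitly and then verify (M1)--(M3). Motivated by Lemma~\ref{lem:cond-mu-lca}, set $\mu(x)=\sigma(x)$ for every leaf $x\in\Gen$; $\mu(x)=\lca_S(\sT(x))$ for every $x$ with $t(x)=\bul$; and $\mu(x)=(p(v_x),v_x)$ for every $x$ with $t(x)\in\{\squ,\tri\}$, where $v_x:=\lca_S(\sT(x))$ and $p(v_x)$ is the parent of $v_x$ in $S$. The latter is well-defined because $v_x\preceq_S\lca_S(\Spe)\prec_S\rho_S$ via the extra root edge of $S$. With this construction, (M1), (M2.i), and (M2.ii) hold by definition, and (M3.i) is immediate from $\sT(x)\subseteq\sT(y)$, hence $v_x\preceq_S v_y$, hence $\mu(x)\preceq_S\mu(y)$ under the edge comparison.

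The real content is (M3.ii) and (M2.iii), both of which I would reduce to a single auxiliary observation: for disjoint $A,B\subseteq\Spe$, if $S$ displays $\rt{\alpha\beta|\gamma}$ for every distinct $\alpha,\beta\in A$ and every $\gamma\in B$, then no $\gamma\in B$ satisfies $\gamma\preceq_S\lca_S(A)$. When $|A|\ge 2$, pick $\alpha^*,\beta^*\in A$ with $\lca_S(\alpha^*,\beta^*)=\lca_S(A)$; then $\gamma\preceq_S\lca_S(\alpha^*,\beta^*)$ would contradict that the triple is displayed. When $|A|=1$ the conclusion holds trivially, since the unique element is a leaf of $S$ distinct from every $\gamma\in B$.

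For (M3.ii) with $t(y)=\bul$ and $x\prec_{\Th}y$, binarity of $T$ together with (O3.a) yields children $y_1,y_2$ of $y$ in $\Th$ with $\sT(y_1)\cap\sT(y_2)=\emptyset$. Assuming w.l.o.g.\ $x\preceq_{\Th}y_1$, set $A=\sT(x)\subseteq\sT(y_1)$ and $B=\sT(y_2)$. For any distinct $\alpha,\beta\in A$ and $\gamma\in B$, picking preimages $a,b\in L_{\Th}(x)$, $c\in L_{\Th}(y_2)$ gives $\lca_{\Th}(a,b)\preceq_{\Th}x\prec_{\Th}y=\lca_{\Th}(a,b,c)$ with $t(y)=\bul$, so $\rt{ab|c}\in\mc{R}_0(\Th)$ and thus $\rt{\alpha\beta|\gamma}\in\mathcal{S}(T;t,\sigma)$ is displayed in $S$. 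The auxiliary observation then forces $\gamma\not\preceq_S v_x$ for every $\gamma\in B$, which combined with $\gamma\preceq_S v_y$ and $v_x\preceq_S v_y$ gives $v_x\prec_S v_y$. A brief case analysis on $x$ (leaf, $\bul$, or $\squ/\tri$), using that $v_y$ is internal by (O3.a), translates this into $\mu(x)\prec_S\mu(y)$. The remaining (M3.ii) case $t(x)=\bul$, $t(y)\in\{\squ,\tri\}$ is immediate, since $v_x\preceq_S v_y$ already gives $\mu(x)=v_x\prec_S(p(v_y),v_y)=\mu(y)$ under the edge ordering.

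For (M2.iii), let $(x,y)\in\EHGT$. By (O3.b), $\sT(x)\cap\sT(y)=\emptyset$, and the set $\mc{R}_i(\Th)$ associated to this transfer edge supplies exactly the triples required to apply the auxiliary observation to $(A,B)=(\sT(x),\sT(y))$ and to $(\sT(y),\sT(x))$. Both applications (with the singleton situations covered by the leaf argument) prove that $v_x$ and $v_y$ are incomparable in $S$, and a direct check on the possible forms of $\mu(x),\mu(y)$ (edge, vertex, or leaf) then yields their incomparability. The main obstacle throughout is the passage from ``$v_x\preceq_S v_y$'' to the strict-inequality or incomparability conclusions --- this is precisely the point at which the hypothesis that $S$ displays $\mathcal{S}(T;t,\sigma)$ is decisively used.
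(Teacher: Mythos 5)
Your proposal is correct and follows essentially the same route as the paper: the identical explicit construction of $\mu$ (leaves to $\sigma(x)$, speciations to $\lca_S(\sT(x))$, duplications/HGT to the edge above $\lca_S(\sT(x))$), followed by verification of (M1)--(M3) using (O3.a), (O3.b), Lemma~\ref{lem:part_gen} and the informative triples. The only difference is organizational: you factor the strictness and incomparability arguments through a single auxiliary observation ($\gamma\not\preceq_S\lca_S(A)$ when all triples $\rt{\alpha\beta|\gamma}$ are displayed), whereas the paper repeats the corresponding case analysis separately in Claims 1, 2 and the (M2.iii) paragraph --- a tidier packaging of the same idea.
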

\begin{proof}
	Recall that $\Gen$ is the leaf set of $T=(V,E)$ and, by Lemma \ref{lem:part_gen}, of $\Th$. In what follows, we write $\LT(u)$ instead of
	the more complicated writing $L_{\Th}(u)$ and, for consistency and simplicity, 
	we also often write $\sigma(\LT(u))$ instead of $\sT(u)$. 
	Put $S=(W,F)$ and $\mathcal{S} =
	\mathcal{S}(T;t,\sigma)$. We first consider the subset $U=\{x\in V \mid x\in \Gen
	\text{ or } t(x) =  \bul\}\}$  
   of $V$ comprising the leaves and speciation
	vertices of $T$.

	In what follows we will explicitly construct $\mu: V \to W\cup F$
	and verify that $\mu$ satisfies Conditions (M1), (M2) and (M3). 
	To this end, we first set for all $x\in U$:
	\begin{enumerate}
		\item[(S1)] $\mu(x) = \sigma(x)$, if $x\in \Gen$,
		\item[(S2)] $\mu(x)= \lca_S(\sigma(\LT(x)))$, if $t(x)=\bul$.
	\end{enumerate}
	Conditions (S1) and (M1), as well as (S2) and (M2.i) are equivalent. 
	
	For later reference, we show that $\lca_S(\sigma(\LT(x))) \in W^0 =
	W\setminus \Spe$ and that there are two leaves $a,b\in \LT(x)$ such that 
	 $\sigma(a) \neq \sigma(b)$,  whenever $t(x)=\bul$.
	Condition (O3.a) implies that
	$x$ has  two children $v$ and
	$w$ in $T$ such that  $\sigma(\LT(v)) \cap
	\sigma(\LT(w)) = \emptyset$. Moreover, 
	Lemma \ref{lem:part_gen} implies that both $\LT(v)$ and $\LT(w)$ are
	non-empty subsets of $\Gen$ and hence, neither 
	$\sigma(\LT(v))=\emptyset$ nor 	$\sigma(\LT(w))=\emptyset$.
	Thus, there are two leaves $a, b\in \LT(x)$ such
	that $\sigma(a) \neq \sigma(b)$. Hence, $\lca_S(\sigma(\LT(x))) \in W^0 =
	W\setminus \Spe$. 

	\smallskip
	\begin{description}
	\item \textbf{Claim 1:} \emph{For all $x,y\in U$ with $x\prec_{\Th} y$ we have $\mu(x)\prec_S \mu(y)$.}\\
	Note, $y$ must be an interior vertex, since $x\prec_{\Th} y$. Hence $t(y)=\bul$.
	
	If $x$ is a leaf, then $\mu(x)=\sigma(x)\in \Spe$. As argued above, 
	$\mu(y) \in W\setminus \Spe$. Since $x\in \LT(y)$ and 
	$\sigma(\LT(y))\neq \emptyset$, 
	 we have 
	$\sigma(x) \in \sigma(\LT(y))\subseteq \Spe$ and thus, $\mu(x)\prec_S \mu(y)$. 

	Now assume that $x$ is an interior vertex and hence, $t(x)=\bul$.
	Again, there are leaves $a,b \in \LT(x)$ with 
	$A = \sigma(a)\neq \sigma(b)=B$.
	Since $t(y)=\bul$, vertex $y$ has two children in $\Th$. Let $y'$ denote the child of  
	$y$ with	$x\preceq_{\Th} y'$. 
	Since $\LT(x)\subseteq \LT(y')\subsetneq \LT(y)$,
	we have $\LT(y)\setminus\LT(y')\neq \emptyset$ and, by 
	Condition (O3.a), 
	there is a gene $c\in \LT(y)\setminus \LT(y') \subseteq \LT(y)\setminus \LT(x)$ with
	$\sigma(c)=C\neq A,B$. 
	By construction, $\rt{ab|c}\in \mc{R}_0$ and hence,
	$\rt{AB|C}\in \mathcal{S}(T;t,\sigma)$.
	Hence, $\lca_S(A,B)\prec_S \lca_S(A,B,C)$. 
	Since this holds for all triples $\rt{x'x''|z}$ with $x',x''\in \LT(x)$
	and $z\in \LT(y)\setminus \LT(y')$, %
	we can conclude that 
	\begin{align*}
		\mu(x) &= \lca_S(\sigma(\LT(x)))  \\ &\prec_S        
				 \lca_S(\sigma(\LT(x))\cup \sigma(\LT(y) \setminus \LT(y'))). 
	\end{align*}
	Since $\sigma(\LT(x))\cup \sigma(\LT(y) \setminus \LT(y')) \subseteq \sigma(\LT(y))$
   we obtain 
	\begin{align*}		
		    &\lca_S(\sigma(\LT(x))\cup \sigma(\LT(y) \setminus \LT(y'))) \\ &\preceq_S  
					\lca_S(\sigma(\LT(y))) = \mu(y). 
	\end{align*}
	Hence,  $\mu(x)\prec_S\mu(y)$.

	\hfill $_{\textrm{\footnotesize{-- End Proof Claim 1 -- }}}$
	\end{description}	\smallskip

	We continue to extend $\mu$ to the entire set $V$. To this end, observe first
	that if $t(x) \in \{\tri, \squ\}$ then we wish to map $x$ on an edge
	$\mu(x) = (u,v) \in F$ such that Lemma \ref{lem:cond-mu-lca}
	is satisfied: $v\succeq_S \lca_S(\sigma(\LT(x)))$. Such an edge exists for $v
	= \lca_S(\sigma(\LT(x)))$ in $S$ by construction. Every speciation vertex $y$
	with $y\succ_{\Th} x$ therefore necessarily maps on the vertex $u$ or above,
	i.e., $\mu(y) \succeq_S u$ must hold.
	Thus, we set:
	\begin{enumerate}
		\item[(S3)] $\mu(x) = (u,\lca_S(\sigma(\LT(x))))$, if $t(x)\in \{\tri, \squ\}$,
	\end{enumerate}
	which now makes $\mu$ a map from $V$ to $W\cup F$.	

	By construction of $\mu$, Conditions (M1), (M2.i), (M2.ii) are satisfied by $\mu$. 

	We proceed to show that (M3) is satisfied.

	\smallskip
	\begin{description}
	\item \textbf{Claim 2:} 
	\emph{ For all $x,y\in V$ with $x\prec_{\Th} y$, Condition (M3) is satisfied.}\\ 
	If both $x$ and $y$ are speciation vertices, then we can apply the Claim 1
	to conclude that $\mu(x)\prec_S \mu(y)$.
	If $x$ is a leaf, then we argue similarly as in the proof of Claim 1
	to conclude that $\mu(x)\preceq_S \mu(y)$.

	Now assume that both $x$ and $y$ are interior vertices of $T$ and 
	at least one vertex of $x,y$ is not a speciation vertex.
	Since, $x\prec_{\Th} y$ we have $\LT(x) \subseteq \LT(y)$ and thus,  $\sigma(\LT(x)) \subseteq
	\sigma(\LT(y))$.

	We start with the case $t(y)=\bul$ and $t(x)\in \{\squ, \tri\}$.
	Since $t(y)=\bul$, vertex $y$ has two children in $\Th$. Let $y'$ be the child of $y$ with
   $x\preceq_{\Th} y'$. 
	If $\sigma(\LT(x))$ contains only one species $A$, then
	$\mu(x) = (u,A)\prec_S u\preceq_S \lca_S(\sigma(\LT(y))) = \mu(y)$.
	If $\sigma(\LT(x))$ contains at least two species, then there are $a,b\in
	\LT(x)$ with $\sigma(a)=A\neq \sigma(b)=B$
	Moreover, since $\LT(x)\subseteq \LT(y')\subsetneq \LT(y)$,
	we have $\LT(y)\setminus\LT(y')\neq \emptyset$ and, by 
	Condition (O3.a), 
	there is a gene $c\in \LT(y)\setminus \LT(y') \subseteq \LT(y)\setminus \LT(x)$ with
	$\sigma(c)=C\neq A,B$. 
	By construction, $\rt{ab|c}\in \mc{R}_0$ and hence
	$\rt{AB|C}\in \mathcal{S}(T;t,\sigma)$. Now we can argue similar as in the proof of the
	Claim 1, to see that 
	 \begin{align*}
	\mu(x) &= (u,\lca_S(\sigma(\LT(x)))) \prec_S u \\ &\preceq_S
	\lca_S(\sigma(\LT(y))) = \mu(y).
	 \end{align*}

	If $t(x)=\bul$ and $t(y)\in \{\squ, \tri\}$, then $\sigma(\LT(x))
	\subseteq \sigma(\LT(y))$ implies that 
	 \begin{align*}
	\mu(x) &=	\lca_S(\sigma(\LT(x)))\preceq_S \lca_S(\sigma(\LT(y))) \\
				&\prec_S(u,\lca_S(\sigma(\LT(y)))) = \mu(y).
	 \end{align*}

	Finally assume that $t(x),t(y)\in \{\squ, \tri\}$. If $\sigma(\LT(x))
	= \sigma(\LT(y))$, then $\mu(x) = \mu(y)$. Now let $\sigma(\LT(x)) \subsetneq
	\sigma(\LT(y))$ which implies that $\lca_S(\sigma(\LT(x)))\preceq_S
	\lca_S(\sigma(\LT(y)))$. If $\lca_S(\sigma(\LT(x))) =
	\lca_S(\sigma(\LT(y)))$, then $\mu(x) = \mu(y)$. If
	$\lca_S(\sigma(\LT(x)))\prec_S \lca_S(\sigma(\LT(y)))$, 
	then 
	 \begin{align*}
		\mu(x)&=(u,\lca_S(\sigma(\LT(x)))) \prec_S u \\
				&\preceq_S \lca_S(\sigma(\LT(y)))
				 \prec (u',\lca_S(\sigma(\LT(y)))) \\ &=\mu(y). 
	 \end{align*} 	\hfill $_{\textrm{\footnotesize{-- End Proof Claim 2 --}}}$
	\end{description}
	\smallskip

	It remains to show (M2.iii), that is, if $e_i=(x,y)$ is a transfer-edge, then
	$\mu(x)$ and $\mu(y)$ are incomparable in $S$. Since $(x,y)$ is a transfer
	edge and by Condition (O3.b), $\sigma(\LT(x)) \cap \sigma(\LT(y)) = \emptyset$.
	If $\sigma(\LT(x))=\{A\}$ and $\sigma(\LT(y))=\{C\}$, then $\mu(x) = (u,A)$
	and $\mu(y) = (u',C)$. Since $A$ and $C$ are distinct leaves in $S$, $\mu(x)$
	and $\mu(y)$ are incomparable.
	Assume that $|\sigma(\LT(x))|>1$. Hence, 
	there are leaves $a,b \in \LT(x)$ with $A = \sigma(a)\neq \sigma(b)=B$
	and $c\in \LT(y)$ with $\sigma(c)=C\neq A,B$. By construction, $\rt{ab|c}\in
	\mc{R}_i$ and hence, $\rt{AB|C}\in \mathcal{S}(T;t,\sigma)$.
	The latter is fulfilled for all triples $\rt{x'x''|c}\in \mc{R}_i$ with $x',x''\in \LT(x)$, 
	and, therefore, $\lca_S(\sigma(\LT(x))\cup\{C\}) \succ_S \lca_S(\sigma(\LT(x)))$.
	Set $v=\lca_S(\sigma(\LT(x))\cup \{C\})$.
	Thus, there is an edge $(v,v')$ in $S$
	with $v'\succeq_S \lca_S(\sigma(\LT(x)))$ and an edge $(v,v'')$  such that $v''\succeq_S C$. 
	Hence, either $\mu(x) = (v,v')$ or $\mu(x) = (u,\lca_S(\sigma(\LT(x)))$ and $v'\succeq_S u$. 
	Assume that $\sigma(\LT(y))$ contains only the species $C$ 
	and thus,  $\mu(y) = (u',C)$. 
	Since $v''\succeq_S C$, we have either 
	$v'' =  C$ which implies that $\mu(y) = (v,v'')$ or 
	$v'' \succ_S  C$ which implies that $\mu(y) = (u',C)$ and $v''\succeq_S u'$. 
	Since both vertices $v'$ and $v''$ are incomparable in $S$, 
	so $\mu(x)$ and $\mu(y)$ are.
  	If $|\sigma(\LT(y))|>1$, then we set  $v=\lca_S(\sigma(\LT(x))\cup \sigma(\LT(y)))$
	and we can argue analogously as above and conclude that there are edges 
	$(v,v')$ and $(v,v'')$ in $S$ such that 
	$v'\succeq_S \lca_S(\sigma(\LT(x)))$  and 
	$v''\succeq_S \lca_S(\sigma(\LT(y)))$.
	Again, 
	since $v'$ and $v''$ are incomparable in $S$ and by construction of $\mu$, 
	$\mu(x)$ and $\mu(y)$ are incomparable.
\hfill \end{proof}

Lemma \ref{lem:mu-triple} implies that consistency of the triple set
$\mathcal{S}(T; t,\sigma )$ is necessary for the existence of a reconciliation
map from $(T; t,\sigma )$ to a species tree on $\Spe$. Lemma
\ref{lem:triple-mu}, on the other hand, establishes that this is also
sufficient. Thus, we have
\begin{theorem}
	There is a species tree on $\Spe = \sigma(\Gen)$ for a  gene tree 
	$(T;t, \sigma)$ on $\Gen$ if
	and only if the triple set $\mathcal{S}(T; t,\sigma )$ is consistent.
\label{thm:charact}
\end{theorem}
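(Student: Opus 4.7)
The statement is the combination of the two preceding lemmas, so my plan is to combine them directly, with a small bookkeeping step at the end to ensure the leaf set matches $\Spe = \sigma(\Gen)$.

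For the necessity direction, suppose there is a species tree $S=(W,F)$ on $\Spe$ for $(T;t,\sigma)$, i.e.\ there is a reconciliation map $\mu:V\to W\cup F$ satisfying (M1)--(M3). For every triple $\rt{ab|c}\in \mathcal{R}(T;t,\sigma)$, Lemma~\ref{lem:mu-triple} gives that $\rt{\sigma(a)\sigma(b)|\sigma(c)}$ is displayed by $S$. By definition of $\mathcal{S}(T;t,\sigma)$, every element of $\mathcal{S}(T;t,\sigma)$ has this form, so $S$ displays $\mathcal{S}(T;t,\sigma)$. Thus $\mathcal{S}(T;t,\sigma)$ is consistent.

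For the sufficiency direction, suppose $\mathcal{S}(T;t,\sigma)$ is consistent. Then \texttt{BUILD} produces a phylogenetic tree $S_0$ on $L_{\mathcal{S}} = \cup_{r\in\mathcal{S}(T;t,\sigma)}L_r(\rho_r) \subseteq \Spe$ that displays $\mathcal{S}(T;t,\sigma)$. To get a species tree whose leaf set equals $\Spe = \sigma(\Gen)$, I would simply attach each species $A\in \Spe\setminus L_{\mathcal{S}}$ as an additional leaf below the root of $S_0$ (subdividing an edge if necessary so that (O1) is respected), and then prepend the extra root edge required by the convention for species trees. The resulting tree $S$ still displays every triple in $\mathcal{S}(T;t,\sigma)$, because adding leaves near the root cannot destroy any displayed triple on the old leaves. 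Lemma~\ref{lem:triple-mu} then supplies a reconciliation map $\mu$ from $(T;t,\sigma)$ to $S$, hence $S$ is a species tree for $(T;t,\sigma)$.

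The only potentially delicate point is the leaf-set issue just described: a species $A\in\Spe$ may fail to appear in any informative triple (for instance if all genes residing in $A$ sit in very ``shallow'' positions of $\Th$), in which case \texttt{BUILD} does not produce it as a leaf. Since the conditions (M1)--(M3) and Lemma~\ref{lem:triple-mu} are insensitive to extra leaves attached at the top of $S$, this is a purely cosmetic adjustment and no additional argument is needed. All the substantive work has already been done in Lemmas~\ref{lem:mu-triple} and \ref{lem:triple-mu}; the theorem itself is essentially a two-line corollary.
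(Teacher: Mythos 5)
Your proof is correct and takes essentially the same route as the paper, which likewise presents the theorem as an immediate consequence of Lemma~\ref{lem:mu-triple} (necessity) and Lemma~\ref{lem:triple-mu} (sufficiency). Your extra step handling species in $\Spe$ that appear in no informative triple --- attaching them near the root so that \texttt{BUILD}'s output becomes a tree on all of $\Spe$ without losing any displayed triple --- is a sound piece of bookkeeping that the paper leaves implicit.
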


\subsection{Non-Binary Gene Trees}
\label{sec:restMu}

\begin{figure*}[ht]
  \begin{center}
    \includegraphics[width=.8\textwidth]{./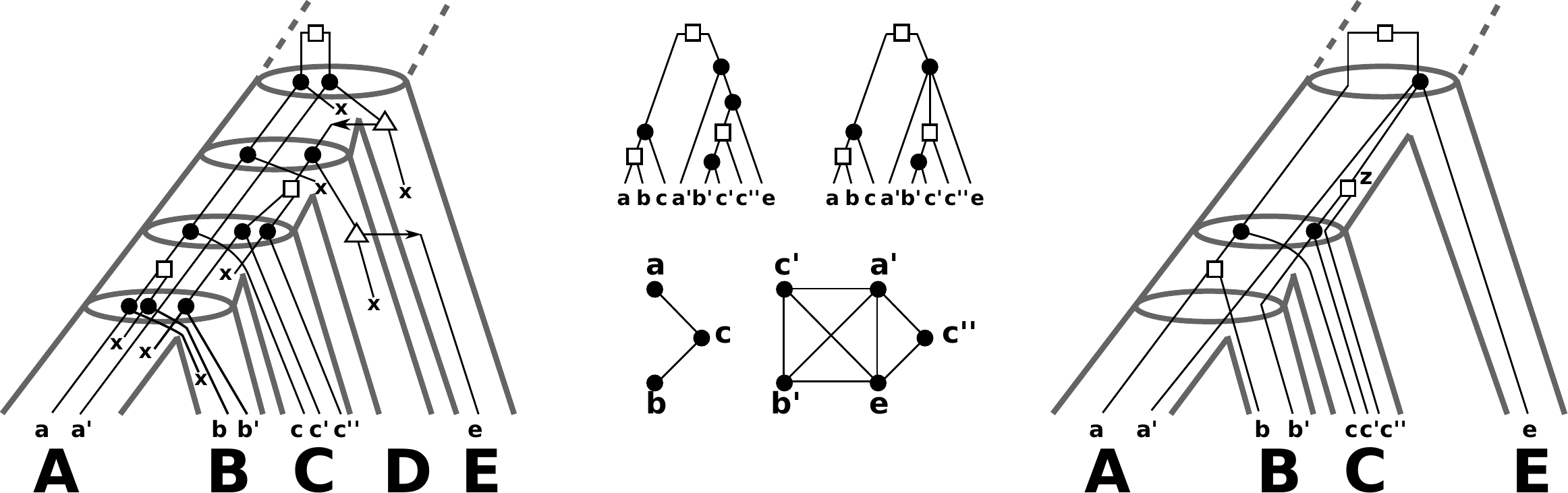}
  \end{center}
	\caption{
	Consider the ``true'' history (left) that is also shown in Figure \ref{fig:true}.
	The center-left gene tree   $(T;t,\sigma)$  is biologically feasible and 
	obtained as the observable part of the true history.
	There is no reconciliation map for  $(T;t,\sigma)$  to  any species 
	tree according to Def.\ \ref{def:mu} because $\mc{S}(T;t,\sigma)$ is inconsistent (cf.\ Thm.\ \ref{thm:charact}). 
	The graph in the lower-center 
	depicts the orthology-relation  that comprises all pairs $(x,y)$ of 
	vertices for which $t(\lca(x,y)) =\bul$. 
	The center-right gene tree  $(T';t,\sigma)$ is non-binary and can directly be computed 
	from the orthology-relation. Although $\mc{S}(T';t,\sigma)$ is inconsistent, 
	there is a valid reconciliation map $\mu$ to a species tree for $(T';t,\sigma)$ 
	according to Def.\ \ref{def:mu} (right). Note,  both trees $(T;t,\sigma)$ and 
	$(T';t,\sigma)$ satisfy axioms (O1)-(O3) and even (O3.A). However, the reconciliation 	
	map $\mu$ does not
	satisfy the extra condition (M2.iv), since $\mu(z)$ and $\mu(a')=A$ are comparable, 
	although $z$ and $a'$ are children of a common speciation vertex. 
	Therefore, Axioms (O1)-(O3) and (O3.A) do not imply (M2.iv).
	Moreover, Thm.\ \ref{thm:charact2} implies that there is no restricted reconciliation map
	for  $(T;t,\sigma)$  as well as $(T';t,\sigma)$ and any species tree, since $\mc{S}(T;t,\sigma)$ 
	and  $\mc{S}(T';t,\sigma)$ are inconsistent. See text for further details.
 }
	\label{fig:counterBinary}
\end{figure*}

Now, we consider arbitrary, possibly non-binary gene trees that might be used  
to model incomplete knowledge of the exact genes phylogeny. 
Consider the ``true'' history of a gene tree that evolves along the (tube-like)
species tree in Figure \ref{fig:counterBinary} (left).
The observable gene tree $(T;t,\sigma)$ is shown in  
\ref{fig:counterBinary} (center-left). Since 
 $\rt{ab|c},\rt{b'c'|a'} \in \mc{R}_0$, 
 we obtain a set of  
species triples $\mathcal{S}(T;t,\sigma)$ that contain the pair of 
inconsistent species triple $\rt{AB|C},\rt{BC|A}$. Thus, 
there is no reconciliation map for 
$(T;t,\sigma)$ and any species tree, 
although $(T;t,\sigma)$ is biologically feasible. 
Consider now 
the ``orthology'' graph $G$ (shown below the gene trees) that 
has as vertex set $\Gen$ and two genes $x,y$ are connected by an edge if $\lca(x,y)$ is a 
speciation vertex. Such graphs can be obtained from orthology
inference methods \cite{Lechner:11a,Lechner:14,inparanoid:10,TG+00} 
and the corresponding \emph{non-binary} 
gene tree  $(T';t,\sigma)$ (center-right) is constructed from such estimates
 (see \cite{HHH+13,HSW:16,HW:16b} for further
    details).
Still, we can see that  $\mathcal{S}(T';t,\sigma)$ contains the two 
inconsistent species triples $\rt{AB|C},\rt{BC|A}$. However, there is 
a reconciliation map $\mu$ according to 
Definition \ref{def:mu} and a species tree $S$, as shown in 
Figure \ref{fig:counterBinary} (right). 
Thus, consistency of  $\mathcal{S}(T';t,\sigma)$  does not characterize whether there
is a valid reconciliation map for non-binary gene trees.

In order to obtain a similar result as in Theorem \ref{thm:charact} for non-binary 
gene trees we have to strengthen observability axiom (O3.a) to \smallskip
\begin{description}
\item[(O3.A)] 
		If $x$ is a speciation vertex with children $v_1,\dots,v_k$, then 
		$\sT(v_i) \cap \sT(v_j) =\emptyset$,  $1\leq i<j\leq k$; \smallskip
\end{description}
and to add an extra event constraint to Definition \ref{def:mu}: \smallskip
\begin{description}
	\item[(M2.iv)]  Let $v_1,\dots,v_k$ be the children of the speciation vertex $x$. 
						 Then, $\mu(v_i)$ and $\mu(v_j)$ are incomparable in $S$, $1\leq i<j\leq k$.
\end{description} \smallskip

We call a reconciliation map that additionally satisfies (M2.iv) a  \emph{restricted reconciliation map}.
Such restricted reconciliation maps 
satisfy the condition as required in \cite{LDEM:16,Lafond2014} for the HGT-free case.  
It can be shown that restricted reconciliation maps imply Condition (O3.A), 
however, the converse is not true in general, see Figure \ref{fig:counterBinary}.
Hence, we cannot use the axioms (O1)-(O3) and (O3.A)
to derive Condition (M2.iv)  - similar to Lemma \ref{lem:cond-mu}(\ref{item:speci}) 
- and thus, need to claim it.

It is now straightforward to obtain the next result. 
\begin{lemma}
	If $\mu$ is a restricted reconciliation map 
   from $(T;t,\sigma)$ to $S$
	and $\rt{ab|c} \in\mathcal{R}(T;t,\sigma)$, then 
	$\rt{\sigma(a)\sigma(b)|\sigma(c)}$ is displayed in $S$.
\label{lem:mu-triple-mB}  
\end{lemma}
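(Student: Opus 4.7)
The plan is to mirror the proof of Lemma \ref{lem:mu-triple} almost verbatim, replacing the single appeal to Lemma \ref{lem:cond-mu}(\ref{item:speci}) (which was the only place where binarity was used) with an appeal to the new axiom (M2.iv). All other ingredients -- (M1), (M2.i)--(M2.iii), (M3), Lemma \ref{lem:part_gen} and Lemma \ref{lem:cond-mu}(\ref{item:lca}) -- hold for arbitrary (non-binary) gene trees, so they transport to the restricted setting with no change.

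First I would split on the two possible origins of the informative triple $\rt{ab|c}$. If $\rt{ab|c}\in\mc{R}_0(\Th)$, let $u=\lca_{\Th}(a,b,c)$, which is a speciation vertex, and let $x$ be the child of $u$ in $\Th$ with $a,b\in L_{\Th}(x)$ and $y$ the (distinct) child of $u$ with $c\in L_{\Th}(y)$. By (M3), $\mu(y)\succeq_S \mu(c)=\sigma(c)$, and by (M3) combined with Lemma \ref{lem:cond-mu}(\ref{item:lca}) we get $\mu(x)\succeq_S \mu(\lca_{\Th}(a,b))\succeq_S \lca_S(\sigma(a),\sigma(b))$. The crucial step is now to argue that $\mu(x)$ and $\mu(y)$ are incomparable in $S$: this follows directly from axiom (M2.iv), since $x$ and $y$ are two distinct children of the speciation vertex $u$. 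Consequently $\sigma(c)$ and $\lca_S(\sigma(a),\sigma(b))$ are incomparable in $S$, forcing $S$ to display $\rt{\sigma(a)\sigma(b)|\sigma(c)}$.

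If instead $\rt{ab|c}\in \mc{R}_i(\Th)$ for some transfer edge $e_i=(x,y)\in\EHGT$, the argument is identical to the corresponding case in Lemma \ref{lem:mu-triple}: up to symmetry $a,b\in L_{\Th}(x)$ and $c\in L_{\Th}(y)$, and the same two applications of (M3) together with Lemma \ref{lem:cond-mu}(\ref{item:lca}) yield $\mu(x)\succeq_S \lca_S(\sigma(a),\sigma(b))$ and $\mu(y)\succeq_S \sigma(c)$. Here (M2.iii) -- which is unchanged in a restricted reconciliation map -- gives the incomparability of $\mu(x)$ and $\mu(y)$, and we again conclude that $S$ displays $\rt{\sigma(a)\sigma(b)|\sigma(c)}$.

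There is really no obstacle beyond locating the right hypothesis: the only non-trivial issue in Lemma \ref{lem:mu-triple} was getting incomparability of the images of two children of a speciation vertex, which previously required binarity via Lemma \ref{lem:cond-mu}(\ref{item:speci}). Axiom (M2.iv) was designed precisely to guarantee this for arbitrary degree, so the proof essentially reduces to quoting (M2.iv) in place of the binarity argument and otherwise copying the earlier proof.
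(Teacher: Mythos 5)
Your proof is correct and follows exactly the route the paper takes: it splits on whether $\rt{ab|c}\in\mc{R}_0(\Th)$ or $\rt{ab|c}\in\mc{R}_i(\Th)$, reuses the argument of Lemma \ref{lem:mu-triple} verbatim for the transfer-edge case, and in the speciation case substitutes Condition (M2.iv) for the appeal to Lemma \ref{lem:cond-mu}(\ref{item:speci}), which is precisely the paper's stated proof. If anything, your write-up is slightly more explicit than the paper's, which merely says to ``argue analogously'' after making that single substitution.
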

\begin{proof}
	Let $\{a,b,c\} \in
	\binom{\Gen}{3}$ and assume w.l.o.g.\ $\rt{ab|c} \in
	\mathcal{R}(T;t,\sigma)$.

	\smallskip
	First assume that $\rt{ab|c} \in \mc{R}_0$, that is $\rt{ab|c}$ is displayed
	in $\Th$ and $t(\lca_{\Th}(a,b,c)) = \bul$. For simplicity set
	$u=\lca_{\Th}(a,b,c)$. Hence, there are two children  $x,y$ of $u$ in $\Th$
	such that  w.l.o.g.\ $a,b\in L_{\Th}(x)$ and $c\in L_{\Th}(y)$. 	
	Now we can argue analogously as in the proof of Lemma \ref{lem:mu-triple}  
	after replacing ``we can apply 	Lemma \ref{lem:cond-mu}(\ref{item:speci}) ''
	by ``we can apply Condition (M2.iv)''. 
	The proof for $\rt{ab|c} \in \mc{R}_i$ remains the same as in  Lemma \ref{lem:mu-triple}.
\end{proof}


\begin{lemma}	
	Let $S$ be a species tree on $\Spe$. 
	Then, there is a restricted reconciliation map 	$\mu$  from a  gene tree $(T;t,\sigma)$ 
   that satisfies also (O3.A) to $S$
	whenever $S$ displays all triples in $\mathcal{S}(T;t,\sigma)$.
\label{lem:triple-mu2}
\end{lemma}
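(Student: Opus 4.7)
The plan is to reuse the construction from the proof of Lemma \ref{lem:triple-mu} verbatim, defining $\mu$ via (S1), (S2), (S3). Since (O3.A) is strictly stronger than (O3.a), the arguments establishing Claims 1 and 2 there carry over unchanged and yield (M1), (M2.i)--(M2.iii), and (M3). The only genuinely new obligation is to verify the extra condition (M2.iv).

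Fix a speciation vertex $x$ with children $v_1,\dots,v_k$ and two distinct children $v_i, v_j$. Write $\alpha_\ell := \lca_S(\sT(v_\ell))$ for the \emph{anchor} of $v_\ell$. By construction, $\mu(v_\ell)$ is either $\alpha_\ell$ itself (when $v_\ell\in\Gen$ or $t(v_\ell)=\bul$) or the edge $(u_\ell,\alpha_\ell)$, where $u_\ell$ is the parent of $\alpha_\ell$ in $S$. A routine case check using the comparability convention for edges shows that it suffices to prove: $\alpha_i$ and $\alpha_j$ are incomparable in $S$. Indeed, if the anchors are incomparable then, because each $u_\ell$ lies strictly above $\alpha_\ell$, neither the edge--vertex nor the edge--edge comparisons built from these anchors can hold.

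The crux is to prove anchor incomparability, which I carry out by contradiction. Assume w.l.o.g.\ $\alpha_i \preceq_S \alpha_j =: \gamma$. By (O3.A) the sets $\sT(v_i)$ and $\sT(v_j)$ are nonempty and disjoint. In the subcase $\alpha_i = \gamma$, the vertex $\gamma$ must be interior in $S$ (otherwise both species sets equal $\{\gamma\}$, contradicting disjointness), so each of $\sT(v_i), \sT(v_j)$ spans at least two children subtrees of $\gamma$. Choose $a_1, a_2 \in L_{\Th}(v_i)$ whose species $A_1, A_2$ lie in distinct children subtrees of $\gamma$ and any $c_1 \in L_{\Th}(v_j)$ with species $C_1$. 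Then $\rt{a_1 a_2 | c_1}\in\mc{R}_0$ (its root is the speciation $x$), hence $\rt{A_1 A_2 | C_1} \in \mathcal{S}(T;t,\sigma)$; but $\lca_S(A_1, A_2) = \gamma = \lca_S(A_1, A_2, C_1)$, so that triple is not displayed in $S$, contradicting the hypothesis. In the subcase $\alpha_i \prec_S \gamma$, the vertex $\gamma$ is interior (a leaf admits no proper descendant), so $\sT(v_j)$ spans at least two children subtrees of $\gamma$. Choose $c, c' \in L_{\Th}(v_j)$ whose species $C, C'$ lie in distinct children subtrees of $\gamma$ and any $a \in L_{\Th}(v_i)$ with $A = \sigma(a)\preceq_S \alpha_i \preceq_S \gamma$; then $\rt{c c' | a}\in\mc{R}_0$ forces $\rt{C C' | A} \in \mathcal{S}(T;t,\sigma)$, yet $\lca_S(C, C') = \gamma = \lca_S(C, C', A)$, again contradicting display. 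Pairwise distinctness of the three species in each triple is automatic from (O3.A). The main obstacle throughout is this $\lca$-collapse case analysis; once the right triples are selected, the contradictions fall out immediately.
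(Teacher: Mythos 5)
Your proof is correct and follows essentially the same route as the paper: reuse the construction (S1)--(S3) with (O3.A) in place of (O3.a) for (M1)--(M3), then reduce (M2.iv) to the incomparability of $\lca_S(\sT(v_i))$ and $\lca_S(\sT(v_j))$, which is forced by the species triples in $\mathcal{S}(T;t,\sigma)$. The only difference is organizational (you argue by contradiction on anchor comparability, the paper splits children by $|\sT(v_\ell)|=1$ versus $|\sT(v_\ell)|>1$), and your explicit $\lca$-collapse argument in fact supplies the justification the paper leaves implicit in its final ``hence incomparable'' steps.
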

\begin{proof}
	The proof is similar to the proof of Lemma \ref{lem:triple-mu2}. 
	However, note that a speciation vertex might have more than two children. 
	In these cases, one simply has to apply
	 Axiom (O3.A) instead of Lemma (O3.a)
	to conclude that (M1),(M2.i)-(M2.iii), (M3) are
	satisfied. 

	It remains to show that 	(M2.iv) is satisfied. 
	To this end, let $x$ be a speciation vertex in $T$ and the set of its children
	$C(x) = \{v_1,\dots,v_k\}$. 
	By axiom (O3.A) we have	$\sT(v_i) \cap \sT(v_j) =\emptyset$ for all $i\neq j$.
	Consider the following partition of $C(x)$ into $C_1$ and $C_2$ that contain
    all vertices $v_i$ with $|\sT(v_i)|=1$ and $|\sT(v_i)|>1$, respectively. 
	By construction of $\mu$, for all vertices in $v_i,v_j\in C_1$, $i\neq j$
	we have that $\mu(v_i)\in \{\sigma(v_i), (u,\sigma(v_i)) \}$ and 
	$\mu(v_j)\in \{\sigma(v_j), (u',\sigma(v_j)) \}$  are incomparable. 
	Now let  $v_i\in C_1$ and  $v_j\in C_2$. Thus there are $A,B\in \sT(v_j)$
	and $\sigma(v_i)=C$. Hence, $\rt{AB|C} \in  \mathcal{S}(T;t,\sigma)$
	Thus, $\lca_S(A,B)$	must be incomparable to $C$ in $S$. 
	Since the latter is satsfied for all species in $\sT(v_j)$, 	
    $\lca_S( \sT(v_j))$	and $C$ must be incomparable in $S$. 
	Again, by construction of $\mu$, we see that 
	$\mu(v_i)\in \{C, (u,C) \}$ and 
	$\mu(v_j)\in \{\lca_S( \sT(v_j)), (u',\lca_S( \sT(v_j))) \}$
	are incomparable in $S$. 
	Analogously, if $v_i,v_j\in C_2$, $i\neq j$, then 
    all triples $\rt{AB|C}$ and $\rt{CD|A}$ for all $A,B\in \sT(v_j)$
	and $C,D\in \sT(v_j)$ are contained in $\mathcal{S}(T;t,\sigma)$
	and thus, displayed by $S$. 
	Hence, $\lca_S( \sT(v_i))$	 and $\lca_S( \sT(v_j))$	
	must be incomparable in $S$. Again, by construction of $\mu$, 
	we obtain that $\mu(v_i)\in \{\lca_S( \sT(v_i)), (u,\lca_S( \sT(v_i))) \}$ and 
	$\mu(v_j)\in \{\lca_S( \sT(v_j)), (u',\lca_S( \sT(v_j))) \}$
	are incomparable in $S$. Therefore, (M2.iv) is satisfied. 
\end{proof}

As in the binary case, we obtain
\begin{theorem}
	There is a restricted reconciliation map for 
	a gene tree $(T;t, \sigma)$ on $\Gen$ that satisfies also (O3.A) and some species tree on
	$\Spe = \sigma(\Gen)$ if
	and only if the triple set $\mathcal{S}(T; t,\sigma )$ is consistent.
\label{thm:charact2}
\end{theorem}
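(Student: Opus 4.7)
The plan is to deduce Theorem~\ref{thm:charact2} as a direct corollary of Lemmas~\ref{lem:mu-triple-mB} and~\ref{lem:triple-mu2}, exactly in parallel to how Theorem~\ref{thm:charact} follows from Lemmas~\ref{lem:mu-triple} and~\ref{lem:triple-mu}. The statement is a biconditional, so I would split into the two implications and handle each with a short appeal to the preceding lemmas.

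For the forward direction, I would assume that there exists a restricted reconciliation map $\mu$ from $(T;t,\sigma)$ to some species tree $S$ on $\Spe$. Lemma~\ref{lem:mu-triple-mB} then tells us that for every $\rt{ab|c}\in\mathcal{R}(T;t,\sigma)$, the induced species triple $\rt{\sigma(a)\sigma(b)|\sigma(c)}$ is displayed by $S$. Taking this over all triples in $\mathcal{R}(T;t,\sigma)$ shows that $S$ displays every triple of $\mathcal{S}(T;t,\sigma)$, which by definition of consistency (via the existence of such a displaying tree) proves that $\mathcal{S}(T;t,\sigma)$ is consistent.

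For the backward direction, I would assume that $\mathcal{S}(T;t,\sigma)$ is consistent. By the discussion of \texttt{BUILD} in Section~\ref{sec:prelim} (Aho et al.~\cite{Aho:81}, Semple--Steel~\cite{Semple:book}), there exists a phylogenetic tree $S$ on $\Spe=\sigma(\Gen)$ that displays all triples in $\mathcal{S}(T;t,\sigma)$; after adding the extra root edge as stipulated in Section~4, $S$ becomes a species tree in the sense used throughout the paper. Lemma~\ref{lem:triple-mu2} then furnishes a restricted reconciliation map $\mu$ from $(T;t,\sigma)$ (which satisfies (O3.A) by hypothesis) to this $S$, completing the proof.

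There is essentially no main obstacle here, since both lemmas have already been established; the only subtlety worth flagging explicitly is that (O3.A) must be invoked on the gene tree side in order to apply Lemma~\ref{lem:triple-mu2} in the backward direction, while it plays no role in the forward direction (which uses only the existence of $\mu$ and Lemma~\ref{lem:mu-triple-mB}). If desired, the proof could be written in just three or four lines: assume (O3.A), apply \texttt{BUILD} to $\mathcal{S}(T;t,\sigma)$ in one direction and Lemma~\ref{lem:mu-triple-mB} in the other, and cite Lemma~\ref{lem:triple-mu2} to finish.
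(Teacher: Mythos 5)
Your proposal is correct and matches the paper's (implicit) argument exactly: the paper states the theorem with the remark ``As in the binary case, we obtain,'' meaning precisely that the forward direction follows from Lemma~\ref{lem:mu-triple-mB} and the backward direction from consistency, \texttt{BUILD}, and Lemma~\ref{lem:triple-mu2}, just as Theorem~\ref{thm:charact} follows from Lemmas~\ref{lem:mu-triple} and~\ref{lem:triple-mu}. Your observation that (O3.A) is only needed for the backward direction is a correct and worthwhile clarification.
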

	

\subsection{Algorithm}

The proof of Lemma \ref{lem:triple-mu} and \ref{lem:triple-mu2} is constructive and we summarize the
latter findings in Algorithm \ref{alg:alg}, see Figure \ref{fig:reconc} for an
illustrative example. 
 
\begin{lemma}
		 Algorithm \ref{alg:alg} returns a species tree $S$ for a binary gene tree $(T;t,\sigma)$ 
		 and a reconciliation map $\mu$ in polynomial time, if one exists
		 and otherwise, returns that there is no species tree for $(T;t,\sigma)$.

		If $(T;t,\sigma)$ is non-binary but satisfies Condition (O3.A), 
	   then Algorithm \ref{alg:alg} returns a species tree $S$ for $(T;t,\sigma)$ 
		 and a  restricted reconciliation map $\mu$ in polynomial time, if one exists
		 and otherwise, returns that there is no species tree for $(T;t,\sigma)$.
\end{lemma}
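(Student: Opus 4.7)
The plan is to show that Algorithm \ref{alg:alg} implements exactly the construction used in the constructive proofs of Lemmas \ref{lem:triple-mu} and \ref{lem:triple-mu2}, and that each of its stages runs in polynomial time. Correctness will essentially be a corollary of Theorems \ref{thm:charact} and \ref{thm:charact2}, while the complexity analysis is the main technical (though routine) obstacle.

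First I would verify correctness. The algorithm performs four conceptual stages: (i) form the forest $\Th$ by deleting the transfer edges $\EHGT$, (ii) enumerate the informative triple set $\mathcal{R}(T;t,\sigma) = \mathcal{R}_0(\Th) \cup \bigcup_{i=1}^h \mathcal{R}_i(\Th)$ and project it to the species triples $\mathcal{S}(T;t,\sigma)$, (iii) feed $\mathcal{S}(T;t,\sigma)$ into \texttt{BUILD}, and (iv) if \texttt{BUILD} succeeds, construct $\mu$ via the rules (S1)–(S3) used in the proof of Lemma \ref{lem:triple-mu}. If \texttt{BUILD} reports inconsistency of $\mathcal{S}(T;t,\sigma)$, then by Theorem \ref{thm:charact} (respectively Theorem \ref{thm:charact2} when $(T;t,\sigma)$ is non-binary and satisfies (O3.A)) no species tree on $\Spe$ admits a (restricted) reconciliation map, so the algorithm correctly rejects. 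Conversely, if \texttt{BUILD} returns a tree $S$, then $S$ displays every triple of $\mathcal{S}(T;t,\sigma)$, and the proof of Lemma \ref{lem:triple-mu} (resp.\ \ref{lem:triple-mu2}) shows that the explicitly constructed $\mu$ satisfies (M1), (M2), (M3) and, in the non-binary case, also (M2.iv). Here one only has to remark that the non-binary variant invokes (O3.A) in place of (O3.a) at the single step where a third species $C$ distinct from $A,B$ is needed among the children of a speciation vertex; this is exactly the point addressed in the proof of Lemma \ref{lem:triple-mu2}.

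Next I would analyze the running time. Let $n = |\Gen|$ and $m = |V(T)|$; note $m = O(n)$ by axiom (O1). The forest $\Th$ is built in $O(m)$ time. For each vertex $x \in V(T)$ one can precompute $L_{\Th}(x)$ in total time $O(nm)$ by a standard bottom-up traversal of $\Th$. Generating $\mathcal{R}_0(\Th)$ and each $\mathcal{R}_i(\Th)$ is then just a matter of walking over the $O(n^3)$ candidate triples and checking the labelling/membership conditions using the precomputed sets; this yields $|\mathcal{S}(T;t,\sigma)| = O(n^3)$. Running \texttt{BUILD} on $\mathcal{S}(T;t,\sigma)$ takes $O(|L_R||R|) = O(n \cdot n^3) = O(n^4)$ time by \cite{Semple:book}, and either outputs $S$ (with $|V(S)| = O(n)$) or reports inconsistency. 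Finally, computing $\mu$ requires, for every vertex $x$ of $T$, the value $\lca_S(\sigma(L_{\Th}(x)))$ and, if $t(x) \in \{\squ,\tri\}$, the unique edge of $S$ immediately above it; this can be done in $O(m \cdot n)$ time using repeated lca queries on $S$, and is clearly dominated by the \texttt{BUILD} step. Altogether the algorithm runs in polynomial time in $n$.

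The main obstacle I anticipate is purely bookkeeping: making sure that steps (i)–(iv) of the algorithm correspond exactly to the construction in the proofs of Lemmas \ref{lem:triple-mu} and \ref{lem:triple-mu2}, so that no additional verification of axioms (M1)–(M3) and (M2.iv) is required beyond citing those lemmas. Once this correspondence is made explicit, both the correctness under the two regimes (binary vs.\ non-binary with (O3.A)) and the polynomial complexity follow with no further technical work.
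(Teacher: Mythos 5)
Your proposal is correct and follows essentially the same route as the paper: correctness is obtained by citing Theorems \ref{thm:charact} and \ref{thm:charact2} together with the explicit construction of $\mu$ in the proofs of Lemmas \ref{lem:triple-mu} and \ref{lem:triple-mu2}, and the runtime is bounded by noting that computing $\mathcal{S}(T;t,\sigma)$, running \texttt{BUILD}, and constructing $\mu$ are each polynomial. Your version merely spells out the concrete $O(n^4)$-style bounds that the paper leaves implicit.
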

\begin{proof}
	Theorem  \ref{thm:charact} and the construction of $\mu$ in the proof of Lemma
	\ref{lem:triple-mu} and \ref{lem:triple-mu2} implies the correctness of the algorithm. 
	
	For the runtime observe that all tasks, computing $\mathcal{S}(T;t,\sigma)$, using the 
	\texttt{BUILD} algorithm \cite{Aho:81,Semple:book} and  the construction of the map
	$\mu$ \cite[Cor.7]{HHH+12} can be done  in polynomial time. 
\hfill  
\end{proof}

\begin{figure*}[tbp]
	\begin{center}
		\includegraphics[width=.8\textwidth]{./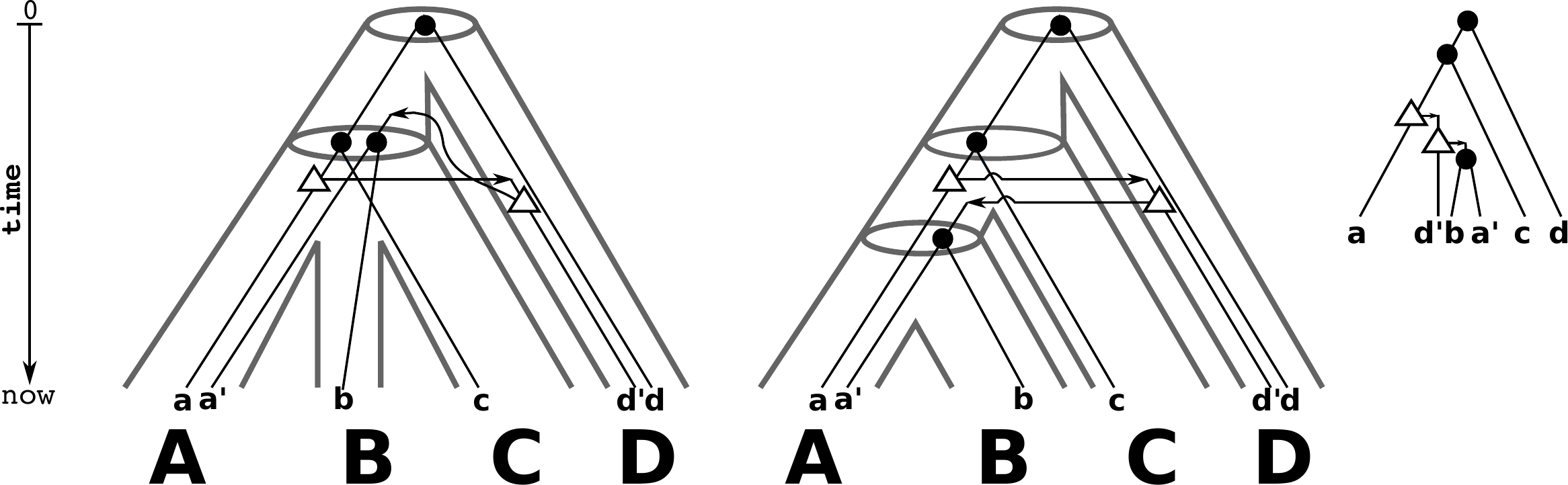}
	\end{center}
	\caption{From the binary gene tree $(T;t,\sigma)$  (right)                                             
		we obtain the species triples $\mathcal{S}(T;t,\sigma) = \{\rt{AB|D},\rt{AC|D}\}$.                  
		Shown are two  (tube-like)  species trees  (left and middle) that 
		display $\mathcal{S}(T;t,\sigma)$. 	The respective  reconciliation maps 
		for $T$ and $S$ are given implicitly by 
		drawing $T$ within the species tree $S$. 
		The left tree $S$ is least resolved for $\mathcal{S}(T;t,\sigma)$. 
		Although there is even a unique reconciliation map  from $T$ to $S$,  
		this map is not time-consistent. Thus, no time-consistent reconciliation
		between $T$ and $S$ exists. 
		On the other hand, for $T$ and the middle species tree $S'$ (that is a refinement of $S$)
		there is a time-consistent reconciliation map. 
		Figure \ref{fig:reconc} provides an example that shows that also least-resolved species trees 
		can have a 	time-consistent reconciliation map with gene trees.
	}
	\label{fig:least}
\end{figure*}

In our examples, the species trees that display $\mathcal{S}(T; t,\sigma )$ is
computed using the $O(|L_R||R|)$ time algorithm \texttt{BUILD}, that either
constructs a tree $S$ that displays all triples in a given triple set $R$ or
recognizes that $R$ is not consistent. However, any other supertree method might
be conceivable, see \cite{Bininda:book} for an overview. The tree $T$ returned
by $\texttt{BUILD}$ is least resolved, i.e., if $T'$ is obtained from $T$ by
contracting an edge, then $T'$ does not display $R$ anymore. However, the trees
generated by $\texttt{BUILD}$ do not necessarily have the minimum number of
internal vertices, i.e., the trees may resolve multifurcations in an arbitrary
way that is not implied by any of the triples in $R$. Thus, depending on $R$,
not all trees consistent with $R$ can be obtained from $\texttt{BUILD}$. 
Nevertheless, in \cite[Prop.\ 2(SI)]{Hellmuth:15a} the following result was
established. 
\begin{lemma}
	Let $R$ be a consistent triple set.
	If the tree $T$ obtained with $\texttt{BUILD}$ applied on $R$ is binary, 
	then $T$ is a unique tree on $L_R$ that displays $R$, i.e., for any tree $T'$ on $L_R$ 
	that displays  $R$ we have $T'\simeq T$.
	\label{lem:build-unique}
\end{lemma}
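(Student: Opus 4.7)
The plan is to prove the statement by induction on $|L_R|$, exploiting the recursive structure of \texttt{BUILD}. Recall that \texttt{BUILD}, at each recursive call on a leaf set $L$, constructs an auxiliary (Aho) graph with vertex set $L$ and an edge $\{a,b\}$ whenever some triple $\rt{ab|c}$ of the current triple set has $a,b,c\in L$, and then recurses on each connected component. The resulting tree $T$ is binary if and only if at every recursion level the Aho graph has exactly two connected components, since each internal node of $T$ corresponds to one such split.

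The base cases $|L_R|\leq 2$ are immediate: there is only one phylogenetic tree on at most two leaves. For the inductive step, let $C_1,C_2$ be the two connected components of the Aho graph on $L_R$, so that $\rho_T$ separates the leaves into exactly $C_1$ and $C_2$. Let $T'$ be any phylogenetic tree on $L_R$ displaying $R$; the first step is to argue that $\rho_{T'}$ also separates $L_R$ into precisely $C_1$ and $C_2$. The key observation is that for each triple $\rt{ab|c}\in R$ we have $\lca_{T'}(a,b)\prec_{T'}\lca_{T'}(a,b,c)$, so $a$ and $b$ must lie in the same child-subtree of $\rho_{T'}$. Chaining this along the edges of the Aho graph, any two leaves in the same Aho component end up in the same child-subtree of $\rho_{T'}$, so the partition induced at $\rho_{T'}$ is a coarsening of $\{C_1,C_2\}$. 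Since $\rho_{T'}$ must have at least two children, it is exactly $\{C_1,C_2\}$. Setting $R_i$ to be the subset of $R$ with all leaves in $C_i$, the restriction $T'|_{C_i}$ displays $R_i$ and has leaf set $C_i$, and $T|_{C_i}$ is precisely the tree produced by \texttt{BUILD} on $R_i$ with leaf set $C_i$. Since $T$ is binary, so is each $T|_{C_i}$, and the inductive hypothesis yields $T'|_{C_i}\simeq T|_{C_i}$; combined with the matching root-split this gives $T'\simeq T$.

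The main obstacle is the partition-matching step: I must ensure there is no hidden freedom in $T'$ arising from leaves of $C_i$ that appear in no triple of $R_i$ (and hence have no ``pull'' from the Aho graph at that level). Such a leaf $x$ reappears as a singleton Aho component in the recursive call on $C_i$, and the binarity of $T$ then forces the two-component structure $\{x\},C_i\setminus\{x\}$, so $x$ sits in a cherry with the subtree on $C_i\setminus\{x\}$ in $T$. The coarsening argument still applies at the root of $T'|_{C_i}$ and pins $x$ into the same position there, which keeps the induction intact. The remaining checks — that $R_i$ is precisely the triple set used in \texttt{BUILD}'s recursive call, and that $T'|_{C_i}$ inherits the display of $R_i$ from $T'$ — are routine.
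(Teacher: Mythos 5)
The paper does not actually prove Lemma~\ref{lem:build-unique}: it is imported verbatim from \cite[Prop.~2(SI)]{Hellmuth:15a}, so there is no in-paper argument to compare yours against. Judged on its own, your proof is correct and is the natural self-contained argument. The two key steps both hold: (i) BUILD's output is binary iff every Aho graph encountered in the recursion has exactly two connected components, and (ii) for any tree $T'$ displaying $R$, each triple $\rt{ab|c}$ forces $a,b$ into the same child-subtree of $\rho_{T'}$ (whether $\lca_{T'}(a,b,c)$ is the root or lies below it), so the leaf partition at $\rho_{T'}$ is a coarsening of the two Aho components and, having at least two blocks, must equal it. Triples straddling $C_1$ and $C_2$ are automatically displayed by any tree with that root split, so dropping them in the recursion loses nothing. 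One point to tighten: as literally stated, your inductive hypothesis concerns trees on $L_{R_i}$, whereas the recursion hands you trees on $C_i$, which may properly contain $L_{R_i}$. The clean fix is to strengthen the statement you induct on to: ``if BUILD called on a leaf set $L\supseteq L_R$ with triple set $R$ returns a binary tree, that tree is the unique tree on $L$ displaying $R$,'' inducting on $|L|$. You essentially do this when you observe that a leaf of $C_i$ occurring in no triple of $R_i$ is an isolated Aho vertex, that binarity then forces the split $\{\{x\},C_i\setminus\{x\}\}$, and that the coarsening argument pins $x$ as a leaf-child of the root of $T'|_{C_i}$ as well; just state the strengthened hypothesis explicitly so the induction is formally licensed.
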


\begin{algorithm}[tbp]
\caption{\texttt{ReconcT}}
\label{alg:Q}
\begin{algorithmic}[1]
  	\Require Non-binary gene tree that satisfies (O3.A) or  binary gene tree  $(T;t,\sigma)$ on $\Gen$
   \Ensure Species tree $S$ for $T$ and a (restricted) reconciliation map $\mu$, if one exists
	\State Compute $\mathcal{S}(T;t,\sigma)$;
	\If{\texttt{BUILD} recognizes $\mathcal{S}(T;t,\sigma)$ as not consistent} 
		\State {write} \emph{``There is no species tree for $(T;t,\sigma)$''} and \textbf{stop;}	
	\EndIf
		\State Let $S=(W,F)$ be the resulting species tree that displays $\mathcal{S}(T;t,\sigma)$ obtained with \texttt{BUILD};
		\For{all $x\in V(T)$}
			\If{$x\in \Gen$}{  set $\mu(x) = \sigma(x)$;			}
			\ElsIf{$t(x)=\bul$} set $\mu(x)=\lca_S(\sT(x))\in W^0$;
			\Else{} set $\mu(x)=(u,\lca_S(\sT(x)))\in F$;
			\EndIf
		\EndFor
		\State \Return $S$ and $\mu$;
\end{algorithmic}
\label{alg:alg}
\end{algorithm}

So-far, we have shown that event-labeled gene trees $(T;t,\sigma)$ for
which a species tree exists can be characterized by a set of species triples
$\mc{S}(T;t,\sigma)$ that is easily constructed from a subset of triples
displayed in $T$. From a biological point of view, however, it is necessary to
reconcile a gene tree with a species tree such that genes do not ``travel
through time''. In \cite{N+17}, the authors gave algorithms
to check whether a given
reconciliation map $\mu$ is time-consistent, and an algorithm with the same
time complexity for the construction of a time-consistent reconciliation
maps, provided one exists.
These algorithms require as input an event-labeled gene tree and species tree. 
Hence, a necessary condition for the existence of time-consistent reconciliation maps is given by
consistency of the species triple $\mc{S}(T;t,\sigma)$ derived from $(T;t,\sigma)$.
However, there are
possibly exponentially many species trees that are consistent with $\mc{S}(T;t,\sigma)$ for which some
of them have a time-consistent reconciliation map with $T$ and some not, see Figure \ref{fig:least}. The
question therefore arises as whether there is at least one species tree $S$ with time-consistent
map, and if so, construct $S$.


\section{Limitations of Informative Triples and Reconciliation Maps}

\begin{figure*}[ht]
  \begin{center}
    \includegraphics[width=.8\textwidth]{./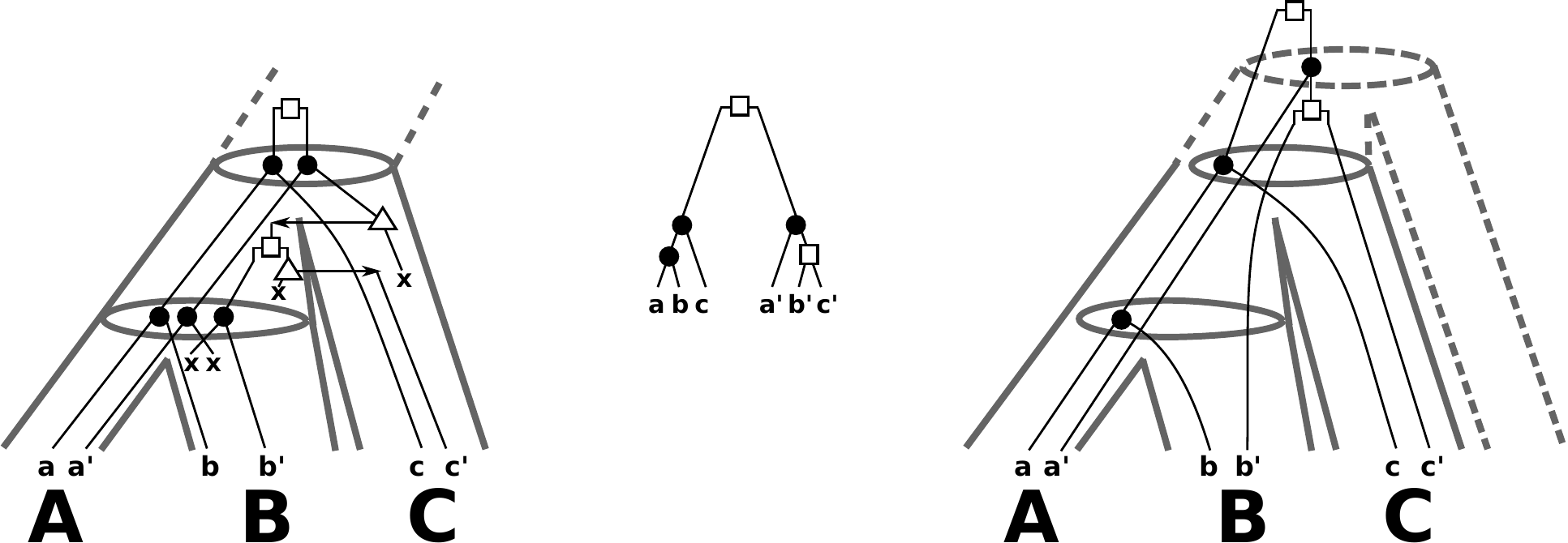}
  \end{center}
	\caption{Shown is a binary and biologically feasible gene tree $(T;t,\sigma)$ (center)
				that is obtained as the observable part of the true scenario (left). 
				However, there is no reconciliation map for $(T;t,\sigma)$ to  any species 
				tree according to Def.\ \ref{def:mu} because $\mc{S}(T;t,\sigma)$ is inconsistent. 
				Nevertheless, a relaxed reconciliation map $\mu$ between $(T;t,\sigma)$ 
				and the species tree exists (right). However, this map does not satisfy 
				Lemma \ref{lem:cond-mu}(\ref{item:speci}) 
				since 
				$\mu(a')=A$ and $\mu(\lca_{\Th}(b',c'))$ are comparable.				 
				See text for further details. 
    			}
	\label{fig:counter-bioF-mu}
\end{figure*}

In Section \ref{sec:restMu} we have already discussed that consistency of 
$\mc{S}(T;t,\sigma)$ cannot be used to characterize whether there 
is a reconciliation map  that doesn't need to satisfy (M2.iv) 
for some non-binary gene tree, see Figure \ref{fig:counterBinary}.
In particular, Figure \ref{fig:counterBinary} 
 shows a biologically feasible binary  gene trees (center-left)
for which, however, neither a reconciliation map nor a restricted reconciliation map
exists.  
Therefore, reconciliation maps provide, unsurprisingly, only a sufficient but not necessary condition
to determine whether  gene trees are biologically feasible.
A further simple example is given in Figure \ref{fig:counter-bioF-mu}.
Consider the ``true'' history of the gene tree that evolves along the (tube-like)
species tree in Figure \ref{fig:counter-bioF-mu} (left). The set of extant genes
$\Gen$ comprises $a,a',b,b',c$ and $c'$ and $\sigma$ maps each gene in $\Gen$ to
the species (capitals below the genes) $A,B,C\in \Spe$. For the observable gene
tree $(T;t,\sigma)$ in Figure \ref{fig:counter-bioF-mu} (center) we observe
that  $\mc{R}_0 = \{\rt{ab|c},\rt{b'c'|a'}\}$ and thus, 
one obtains
the inconsistent species triples $\mc{S}(T;t,\sigma) = \{\rt{AB|C},\rt{BC|A}\}$.
Hence, Theorem \ref{thm:charact} implies that there is no species tree for
$(T;t,\sigma)$. Note,  $(T;t,\sigma)$ satisfies also Condition (O3.A). 
Hence, Theorem \ref{thm:charact2} implies that no restricted reconciliation
map to any species tree exists for $(T;t,\sigma)$.
Nevertheless, $(T;t,\sigma)$ is biologically feasible as there is a
true scenario that explains the gene tree.

If Condition (M2.i) would be relaxed, that is, if we allow for speciation
vertices $u$ that $\mu(u) \succeq_S \lca_S(\sT(u))$, then there is a relaxed
$\mu$ from $(T;t,\sigma)$ to the species tree $S$ shown in Figure
\ref{fig:counter-bioF-mu} (right). Hence, consistency of $\mc{S}(T;t,\sigma)$
does not characterize the existence of relaxed reconciliation maps.

\section{Conclusion and Open Problems}
	
	Event-labeled gene trees can be obtained by combining the reconstruction of
	gene phylogenies with methods for orthology and HGT detection. We showed that
	event-labeled gene trees $(T;t,\sigma)$ for which a species tree exists can be
	characterized by a set of species triples $\mc{S}(T;t,\sigma)$ that is easily
	constructed from a subset of triples displayed in $T$.

	We have shown that  biological feasibility of  gene trees cannot be explained 
	in general by reconciliation maps, that is, there are biologically feasible gene trees
	for which no reconciliation map to any species tree exists. 
	Moreover, we showed that consistency of $\mc{S}(T;t,\sigma)$ does not characterize
	the existence of relaxed reconciliation maps. 

	 We close this contribution by stating some open problems that need to be
	solved in future work. 

	\emph{\bf(1)} Are all event-labeled gene trees $(T;t,\sigma)$ biologically feasible?
		
		\emph{\bf(2)} The results established here are based on informative triples provided by the gene trees. 
					  If it is desired to find ``non-restricted'' reconciliation maps (those for which Condition (M2.iv) is not required) 
						for non-binary gene trees
					  the following question needs to be answered:
   					How much information of a non-restricted reconciliation map and a species tree 
		    			is already contained in \emph{non-binary} event-labeled  gene trees $(T;t,\sigma)$?
					The latter might also be generalized by considering  relaxed reconciliation maps (those for which 
					$\mu(x)\succ_S \lca_S(\sT(x))$ for speciation vertices $x$ or any other relaxation is allowed).                

	\emph{\bf(3)} Our results depend on three axioms (O1)-(O3) on the event-labeled
		         gene trees that are motivated by the fact that event-labels can
		         be assigned to internal vertices of gene trees only if there is
		         observable information on the event. The question which
		         event-labeled gene trees are actually observable given an
		         arbitrary, true evolutionary scenario deserves further
		         investigation in future work, since a formal theory of
		         observability is still missing.

	\emph{\bf(4)} The definition of reconciliation maps is 
					by no means consistent in the literature. 
					For the results established here we considered three types of 
					  reconciliation maps, that is, the ``usual'' map as in Def.\ \ref{def:mu}
                 (as used in e.g.\ \cite{THL:11,BAK:12,HHH+12,N+17}), a restricted version (as used in e.g.\ \cite{LDEM:16,Lafond2014})
						and a relaxed version. 
				   However, a unified framework for reconciliation maps is desirable and might
					be linked with a formal theory of   observability.

	\emph{\bf(5)} ``Satisfiable'' event-relations $R_1,\dots,R_k$ are those for which there is a 
						representing gene tree $(T;t,\sigma)$
						such that $(x,y)\in R_i$ if and only if $t(\lca(x,y))=i$. They are equivalent to 
						so-called unp 2-structures \cite{HSW:16}. 
					 	In particular, if event-relations consist of orthologs, paralogs and xenologs only, then
						satisfiable event-relations are equivalent to directed cographs \cite{HSW:16}. 
						Satisfiable event-relations $R_1,\dots,R_k$ are ``S-consistent'' if there is a species tree $S$
						for the representing gene tree $(T;t,\sigma)$ \cite{LDEM:16,Lafond2014}. 
						However, given the unavoidable noise in the input data and possible
						uncertainty about the true relationship between two genes, 
						one might ask to what extent the work of Lafond et al.\ \cite{LDEM:16,Lafond2014}
						can be generalized to determine whether given ``partial'' event-relations 
						are S-consistent or not. 
						It is assumable that subsets of the informative
						species triples $\mc{S}(T;t,\sigma)$ that might be directly computed from such
						event-relations can offer an avenue to the latter problem.
						Characterization and complexity results for 
						``partial'' event-relations to be satisfiable have been addressed in  \cite{HW:16a}.

		\emph{\bf(6)} In order to determine whether there is a time-consistent reconciliation map
					for some given event-labeled gene tree and species 
					trees fast algorithms have been developed \cite{N+17}. 
					 However, these algorithms require as input a gene tree $(T;t,\sigma) $
					\emph{and} a species tree $S$. 	
					A necessary condition to a have time-consistent (restricted) reconciliation map
					to some species tree is given by the consistency of the species triples $\mathcal{S}(T;t,\sigma)$. 
					However, in general there might be
					exponentially many species trees that display $\mathcal{S}(T;t,\sigma)$ for which
					some of them may have a time-consistent reconciliation map with $(T;t,\sigma)$ and some might have
					not (see Figure \ref{fig:least} or \cite{N+17}). Therefore, additional constraints to determine whether 
					there is at least one species tree $S$ with time-consistent
					map, and if so, construct $S$, must be established. 

		\emph{\bf(7)}	A further key problem is the identification of horizontal transfer events. 
						In principle, likely genes that have been introduced into a genome by HGT
					   can be identified directly from sequence data \cite{soucy2015horizontal}. Sequence
						composition often identifies a gene as a recent addition to a genome. In the absence
					   of horizontal transfer, the similarities of pairs of true orthologs in the species pairs
						(A,B) and (A,C) are expected to be linearly correlated. Outliers are likely candidates for
						HGT events and thus can be ``relabeled''. However, a 
						more detailed analysis of the relational properties of horizontally 
						transferred genes is needed.

\bibliographystyle{plain}
\bibliography{biblio}
\end{document}